\theoremstyle{plain}
\newtheorem{theorem}{Theorem}
\numberwithin{theorem}{section}
\newtheorem{proposition}{Proposition}
\numberwithin{proposition}{section}
\newtheorem{lemma}{Lemma}
\numberwithin{lemma}{section}
\numberwithin{corollary}{section}
\numberwithin{example}{section}
\theoremstyle{definition}
\newtheorem{definition}{Definition}
\numberwithin{definition}{section}
\newtheorem{remark}{Remark}
\numberwithin{remark}{section}
\newtheorem{notation}{Notation}
\numberwithin{notation}{section}
\numberwithin{equation}{section}
\providecommand{\keywords}[1]
{
  \small	
  \textbf{Keywords:} #1
}
\title{Uniqueness of MAP estimates for inverse problems under information field theory}
\date{May 26 2025}
\begin{document}

\author[1]{Alex Alberts\thanks{Corresponding author   \href{mailto:albert31@purdue.edu}{\color{blue}{\texttt{albert31@purdue.edu}}}}}
\author[1]{Ilias Bilionis}
\affil[1]{School of Mechanical Engineering, Purdue University, West Lafayette, IN}

\maketitle

\begin{abstract}
    Information field theory (IFT) is an emerging technique for posing infinite-dimensional inverse problems using the mathematics found in quantum field theory.
    Under IFT, the field inference task is formulated in a Bayesian setting where the probability measures are defined by path integrals.
    We derive conditions under which IFT inverse problems have unique maximum a posterioi (MAP) estimates, placing a special focus on the problem of identifying model-form error.
    We define physics-informed priors over fields, where a parameter, called the model trust, measures our belief in the physical model.
    Smaller values of trust cause the prior to diffuse, representing a larger degree of uncertainty about the physics.
    To detect model-form error, we learn the trust as part of the inverse problem and study the limiting behavior.
    We provide an example where the physics are assumed to be the Poisson equation and study the effect of model-form error on the model trust.
    We find that a correct model leads to infinite trust, and under model-form error, physics that are closer to the ground truth lead to larger values of the trust.
\end{abstract}

\keywords{Information field theory, Inverse problems, Model-form error, Scientific machine learning}

\onehalfspacing

\section{Introduction}
\label{sec:intro}

One of the most fundamental tasks within the realm of uncertainty quantification is the ability to unravel hidden variables from observed data.
Such \emph{inverse problems} are encountered across numerous disciplines within science and engineering.
Because most inverse problems are ill-posed in the classical sense of Hadamard~\cite{kabanikhin2008definitions}, methodologies which solve inverse problems are often posed in a Bayesian way.
In this type of approach, some assumed prior knowledge is placed on the hidden variables.
The prior information is then combined with the data through a likelihood function, and Bayes's theorem is applied to derive the posterior over the hidden variables.

While the treatment of inverse problems is well-studied, it remains a nuanced task with multiple challenges.
Some of these include the high-dimensionality of the problem, computational complexity, the acquisition of expensive datasets, and the selection of prior information, among others.
In this paper, we place a specific focus on the problem of well-posedness in the solution of inverse problems through the lens of the maximum a posterioi (MAP) estimate.
Many inverse problems have multiple candidate solutions, and identifying the correct solution becomes a critical challenge.
This situation commonly appears when there is insufficient data, e.g., missing boundary conditions, unobserved quantities, etc., and ill-posedness may manifest in other ways.
See~\cite{ghattas2021learning} for an in-depth discussion on ill-posed inverse problems.

We study inverse problems through the lens of information field theory (IFT), which is a methodology for performing Bayesian inference over fields by making use of probability measures defined over function spaces using the path integral formalism~\cite{ensslin2009information}.
A field is a physical quantity with a value for every point in space and time, including scalar, vector, or tensor fields.
The mathematics of IFT are based in the theory of path (functional) integration~\cite{cartier2006functional}, and we can rely on various properties of the integral to derive some useful results.
IFT was originally developed for applications in cosmology~\cite{elsner2010local, oppermann2012improved, westerkamp2024first, edenhofer2024parsec}, and has seen some interest in medical imaging~\cite{frank2016dynamic, galinsky2017unified, frank2020jedi}, computational biology~\cite{guardiani2021non, chen2021field}, and engineering ~\cite{pandey2022review}.
Furthermore, it can be shown that other approaches such as physics-informed neural networks (PINNs)~\cite{raissi2019physics}, Bayesian-PINNs~\cite{yang2021b}, or even Gaussian process regression~\cite{schulz2018tutorial} are limiting cases of IFT, and results derived from IFT can naturally be extended to these methodologies as well.

In this work, we relax the definition of a well-posed inverse problem to one where the MAP estimate exists and is unique.
We refer to such problems as \emph{weakly well-posed}.
The posterior may also appear to be flat, in which case the problem does not have an identifiable solution.
Simply put, between the selection of the prior and the observed data, there may not be enough information available to identify a unique `best guess' of the hidden variables.
Determining when an inverse problem has a unique MAP estimate can prevent needlessly wasting resources collecting additional data, which can be costly or even impossible in many cases.

We define what it means for an inverse problem to be weakly well-posed in the context of IFT, and prove a few useful theorems for conditions which are sufficient to make an inverse problem well-posed.
Some analytical results are derived, with a special focus placed on the problem of detecting the presence of model-form error.
The paper is organized as follows: a brief overview of IFT is provided in Sec.~\ref{sec:iftreview}, along with a discussion of the application of IFT to the problem of detecting model-form error. 
In Sec.~\ref{sec:inverseprobs}, we explore different properties of MAP estimates under IFT, and derive a theorem for identify when such a MAP estimate is unique.
Finally, in Sec.~\ref{sec:example} we provide a brief example of how the theorems can be used in practice, with a focus on model-form error identification.

\section{Review of information field theory}
\label{sec:iftreview}

IFT is a Bayesian methodology for quantifying uncertainty over fields, and the various mathematical objects it studies are derived using the path integral formalism, which commonly appear in statistical field theory~\cite{parisi1988statistical} and in quantum field theory~\cite{lancaster2014quantum}.
IFT begins by defining a prior probability measure over fields, which typically encodes some physical knowledge about the field, e.g., regularity, smoothness, symmetries, differential equations which the field is known to obey, and others.
Combining this prior with data in the form of the likelihood via Bayes's theorem provides the posterior knowledge about the field(s) of interest.

We briefly review the traditional Bayesian approach to inverse problems~\cite{stuart2010inverse} so that we may communicate clearly how IFT treats inverse problems differently.
As an example, consider a model of a physical process, which we take to be a partial differential equation (PDE),
\begin{equation}
    \label{eqn:pde}
    \left\{
    \begin{split}
        P[\phi;u] &= 0 \quad \mathrm{on} \quad \Omega \subset \mathbb{R}^d \\
        B[\phi] &= 0 \quad \mathrm{on} \quad \partial \Omega,
    \end{split}
    \right.
\end{equation}
where $P$ is a differential operator, $B$ prescribes the boundary conditions, $\phi$ is the solution to the PDE, and $u$ is some unknown input field.
Under the classical approach, the structure of the PDE is assumed to be perfectly correct in the sense that once $u$ is recovered, we may determine $\phi$ by solving the PDE (the forward problem).
The inverse problem is then simply to identify the input $u$.
This is usually performed by assuming a Gaussian process prior for $u$, while the solution of the PDE on collocation points defines the likelihood.
In contrast, under IFT it is typical to relax the assumption that the model is correct.
We do so by encoding the physics into the prior, rather than the likelihood.
Then, the inverse problem involves inferring both $\phi$ and $u$.

\subsection{Setting and notation}

In practice, we seek to infer a field of interest $\phi : \Omega \to \mathbb{R}$, where $\Omega \subseteq \mathbb{R}^{d_{\Omega}}$, and $\phi$ belongs to some Banach space $\mathcal{X}$.
We will let $\phi^*$ denote the unknown ground truth field, which is distinct from any estimate $\phi$.
Under our treatment of IFT, we are primarily concerned with inferring fields which are thought to satisfy some PDE, hence we will often take $\mathcal{X}$ to be a Sobolev space or otherwise some subset of $L^2(\Omega)$.
Given a multi-index $\alpha = (\alpha_1 \dots \alpha_q) \in \mathbb{N}^q$ with order $|\alpha| = \sum_i \alpha_i$, and a field $\phi \in \mathcal{X}$, let
$$
D^{\alpha}\phi = \frac{\partial^{|\alpha|}}{\partial x_1^{\alpha_1}\cdots\partial x_n^{\alpha_q}}\phi
$$
represent the mixed (weak) partial derivative of $\phi$.
Given $\tau \in \mathbb{N}$, denote the Sobolev space of functions on $\Omega$ with square-integrable weak derivatives up to order $\tau$ by $W^{\tau}$, that is,
$$
W^{\tau} = \left\{\phi \in L^2(\Omega) : D^{\alpha}\phi \in L^2(\Omega) \:\forall\: |\alpha| \leq \tau \right\}.
$$
We also equip the fields with the inner product on $L^2(\Omega)$ denoted by
$$
\psi^{\dagger}\phi \coloneqq \langle \phi, \psi \rangle_{L^2(\Omega)} = \int_{\Omega} \phi \psi \:d\Omega,
$$
for $\phi,\psi \in L^2(\Omega)$.
In the case of an operator $F : \mathbb{R} \to \mathcal{X}$, $F^*$ denotes the adjoint of the operator, defined by the unique map $F^*: \mathcal{X} \to \mathbb{R}$ such that
$$
cF(\phi) = \langle \phi, F^*(c)\rangle_{L^2(\Omega)},
$$
for $\phi \in \mathcal{X}$, $c\in\mathbb{R}$.

\subsection{Construction of information field theory posteriors}

IFT begins by defining a prior probability measure over the space of field configurations, formally denoted by $p(\phi)$.
To construct this prior, we follow the approach as outlined in~\cite{alberts2023physics}, and later extended to dynamical systems in~\cite{hao2023information}.
This approach makes use of physics-informed priors coming from some field energy functional, the minimization of which provides the state of the system.
The energy functional usually is defined from a PDE, which aligns with physics-informed techniques under the scientific machine learning paradigm~\cite{karniadakis2021physics}.
Alternatively, one could consider an action principle or Tikhonov regularization~\cite{chang2014path}.
We construct the theorems here so that they apply to IFT in general, and the physics-informed case serves as a motivating example.
In classical applications of IFT,~\cite{ensslin2009information}, the prior is typically taken to be a zero-mean Gaussian random field, where the covariance kernel is selected to match some field regularity constraints.
Oftentimes, the correlation structure of the field is learned from the data rather than assumed, and our theorems on parameter identification stated here may apply to this case by relaxing some assumptions.

We assume that the current state of knowledge of the physics is captured by a generic local energy functional
\begin{equation}
    E(\phi;u) = \int_{\Omega} h\left(x, \phi(x), \nabla_x\phi(x),\dots ;u \right)dx,
    \label{eqn:energy}
\end{equation}
where $h$ is an energy density function, and $u$ represents various parameters or additional fields, such as a source term.
The minimization of eq.~(\ref{eqn:energy}), constrained on any boundary conditions, yields the desired field $\phi$.
The task is then to infer $\phi$ or the parameters (or additional fields) $u$, the regime of classical Bayesian inverse problems.
Typically, the energy is derived from a boundary value problem described by the PDE eq.~(\ref{eqn:pde}).
In some cases, a variational principle for eq.~(\ref{eqn:pde}) exists, and an exact field energy can be derived.
A classic example of this is Dirichlet's principle, which describes the variational form of the Poisson equation~\cite{brezis2011functional}.
Otherwise, the integrated square residual of the PDE can be used in applications as an approximation to the field energy.

IFT makes use of the path integral formalism, where we define a Boltzmann-like physics-informed prior for the field conditional on the parameters:
\begin{equation}
    p\left(\phi | u, \beta\right) = \frac{\exp\left\{-H\left(\phi;u, \beta\right)\right\}}{Z\left(u, \beta\right)},
    \label{eqn:piftprior}
\end{equation}
where 
$$
H(\phi ; u, \beta) = \beta E(\phi;u)
$$
is a functional of the field known as the \emph{prior potential}, and
\begin{equation}
    \label{eqn:pathintegral}
    Z(u, \beta) = \int \mathcal{D}\phi\: \exp\left\{-H(\phi ; u, \beta)\right\}
\end{equation}
is the \emph{partition function}, which acts as the normalization constant.
The formally defined probability measure of eq.~(\ref{eqn:piftprior}) is defined over the space of field configurations, $\mathcal{X}$, and the integral contained in $Z(u,\beta)$ is a taken to be a path integral.
Note that in IFT literature, the above objects are often posed in an information theoretical way, i.e., we define an \emph{information Hamiltonian} to be $-\log p(\phi|u,\beta)$.
Our prior potential is equivalent to a typical information Hamiltonian, up to the normalization constant.
We are choosing to separate the normalization constant as this is needed for some of the proofs in this work.

For forward problems where we only seek to infer the field, the partition function may be treated as a constant (although it may be infinite, or even ill-defined), and dropped for sampling algorithms.
For potentials which have a quadratic and positive-definite form, that is, the potential can be written as
$$
H(\phi;u) = \frac{1}{2}\phi^{\dagger}C^{-1}\phi,
$$
where $C:\mathcal{X} \to \mathcal{X}$ is trace class and positive-definite, then eq.~(\ref{eqn:piftprior}) can be given a rigorous treatment.
This is done by defining the prior on a finite-dimensional subspace of $\mathcal{X}$ and recovering a Gaussian random field in the limit~\cite{glimm2012quantum}.
There are even certain PDEs for which this holds.
For example, in~\cite{alberts2025interpretation}, it is shown that for the Poisson equation, eq.~(\ref{eqn:piftprior}) is related to the Brownian bridge measure.
The physics-informed prior given by eq.~(\ref{eqn:piftprior}) is intuitively defined in such a way where fields which are closer to minimizing eq.~(\ref{eqn:energy}) are a priori more likely.

In addition to the physics, a nonnegative scaling parameter, $0< \beta <\infty$, is included in the prior to ensure that the potential remains unitless.
It is analogous to the inverse-temperature parameter of the Boltzmann distribution and controls the spread of the prior.
The measure collapses to a minimum field energy as $\beta$ tends to infinity, and the measure becomes flat as $\beta \to 0$.
In this way we find that $\beta$ is a pseudo-metric which quantifies our belief in the chosen physical knowledge.
That is, $\beta$ quantifies with how much certainty we believe eq.~(\ref{eqn:energy}) to represent the true underlying physics which the field is assumed to obey.
For this reason, $\beta$ is referred to as the \emph{model trust} parameter.

From this point forward, to simplify notation, we consolidate the parameters into the tuple $\lambda = (u,\beta)$.
For inverse problems, we define a prior over various hyperparameters/hyperfields via $p(\lambda)$, chosen according to the application at hand.
For example, if $u$ is a field, it may be assigned a Gaussian process prior as usual.
We will see certain assumptions about the structure of this prior must be made to satisfy various theorems derived here.
The model trust $\beta$ is typically assigned a flat prior or Jeffrey's prior over $\mathbb{R}_+$.
We will define the quantity $H(\lambda)= -\log p(\lambda)$ to be the \emph{parameter potential}.

As in any application of Bayesian inference, the next step is to define the likelihood.
We will restrict ourselves to linear measurements with additive Gaussian noise.
That is, suppose we collect data $d \in \mathbb{R}^n$ following 
\begin{equation}
    \label{eqn:data}
    d = R\phi + \gamma,
\end{equation}
where $R : \mathcal{X} \to \mathbb{R}^n$ is the measurement operator, and $\gamma \sim \mathcal{N}(0, \Gamma)$ is the noise process with an $n\times n$ noise covariance matrix $\Gamma$.
Hence, the likelihood is given by $p(d | \phi) = \mathcal{N}(d|R\phi, \Gamma)$, and we define the \emph{data potential} to be $H(d;\phi) = - \log p(d|\phi)$.

The IFT joint posterior over the field and parameters is derived by formal application of Bayes's theorem: $p(\phi,\lambda | d) = \frac{1}{p(d)}p(d|\phi)p(\phi|\lambda)p(\lambda)$.
All information about the posterior can be summarized by the joint \emph{posterior potential}
\begin{equation}
\label{eqn:joint}
    H(\phi, \lambda ; d) \:\hat{=}\: H(d;\phi) + H(\phi;\lambda) - \log  Z(\lambda)+ H(\lambda)
\end{equation}
and corresponding normalization constant\footnote{We have dropped the normalization constant of the likelihood from this expression, as it does not depend on $\phi$ nor on $\lambda$. For this reason we have written `$\hat{=}$.'}
\begin{equation*}
\label{eqn:postpartition}
    p(d) = \int \mathcal{D}\phi\: \int d\lambda \: \exp\left\{-H(\phi,\lambda;d)\right\}.
\end{equation*}
Under special cases, the IFT posterior derived from a \emph{forward} problem can be characterized rigorously as a Gaussian random field, referred to as \emph{free theory}~\cite{ensslin2019information}.
\begin{definition}[Free theory]
    \label{def:freetheory}
    Suppose the conditional prior $p(\phi|\lambda)$ can be characterized as a zero-mean Gaussian random field with covariance operator $S_{\lambda}$, i.e., $p(\phi|\lambda) = \mathcal{N}(\phi|0, S_{\lambda})$.
    Further suppose that the data is collected according to eq.~(\ref{eqn:data}). 
    Then, we say we work under free theory, and the posterior is also a Gaussian random field, given by $p(\phi | d, \lambda) = \mathcal{N}(\phi|\Tilde{m}_{\lambda}, \Tilde{S}_{\lambda})$.
    The expressions for the posterior mean and covariance are well known~\cite{seeger2004gaussian}, namely:
    $$
    \begin{array}{ll}
        \Tilde{m}_{\lambda} =(S^{-1}_{\lambda}+R^{*}\Gamma^{-1}R)^{-1}R^{*}\Gamma^{-1}d  \\
         \Tilde{S}_{\lambda} = (S^{-1}_{\lambda}+R^{*}\Gamma^{-1}R)^{-1}. 
    \end{array}
    $$
\end{definition}

The partition function greatly complicates characterizing the full posterior.
For forward problems where $\lambda$ is kept fixed, eq.~(\ref{eqn:pathintegral}) can be treated as a constant, and eq.~(\ref{eqn:joint}) summarizes all posterior knowledge.
The posterior can then be characterized numerically in a number of means, for example, stochastic gradient Langevin dynamics (SGLD) and variants~\cite{alberts2023physics, hao2025neural} or stochastic variational inference~\cite{hao2023information, knollmuller2019metric, frank2021geometric}.
We are interested in studying classical Bayesian inverse problems, where $\lambda$ is learned from data.
In this situation, the partition function is no longer constant and appears in the joint posterior.
Because the partition function is defined as a path integral, this presents various mathematical issues which must be treated with care.
We derive some properties of inverse problems in a way which avoids the issue of characterizing the partition function.
First, we briefly discuss the problem of detecting model-form error, an important inverse problem which can be studied through IFT.

\subsection{Discussion on model-form error}
\label{sec:modelerrordisc}
From here we remark how the model trust parameter can be used in the application of detecting model-form error.
Typically a series of assumptions are made to derive the physical model, and the model is simply an approximation of the ground truth physics.
Model-form error is the epistemic uncertainty which arises from an imperfect physical model, identified as a critical challenge over two decades ago~\cite{draper1995assessment, kennedy2001bayesian}.
The recent surge in popularity of physics-informed models has led to a new generation of methods for quantifying model-form error~\cite{sargsyan2019embedded, park2014bayesian, riley2011quantification, geneva2019quantifying, brynjarsdottir2014learning}.
The available methods typically couple a solver for the physics with a technique from uncertainty quantification.
This involves resolving the problem multiple times, creating a bottleneck.

The physics-informed IFT approach avoids this issue by including the trust directly in the prior over the field.
Recall that the prior potential conditional on the parameters is the field energy scaled by a factor $\beta$.
This implies that $\beta$ controls the strength of the contribution of the physics to the posterior.
The case $\beta=0$ corresponds to the selection of a flat field prior, and the physics plays no role in the structure of the posterior.
There is no trust in the physics.
The posterior behavior is dominated by the likelihood, and the method reduces to a Bayesian regression scheme with a flat prior placed on the field.
As the trust begins to increase, the physics contributes a greater effect on the structure of the posterior, eventually dominating the posterior behavior altogether.
The extreme case $\beta\rightarrow\infty$ arises when complete belief is placed in the physics.
The prior collapses to the field which minimizes the energy, and the only field considered is the one a priori assumed to be correct.
This is a manifestation of the belief that the model is perfect.
If the field energy is the variational form of a PDE, then this minimum field energy is exactly the solution to the boundary value problem, if the field is a priori assumed to satisfy the boundary conditions (before we have observations in the domain).

We are primarily concerned with the case where $0<\beta<\infty$.
That is, when we have some moderate trust in the physics, which leads to finite, nonzero variance in the sample fields from the prior.
Many physical systems of interest operate in this regime, as the physics selected to model the system are not a perfect representation of the ground truth which generates the data.
Likewise, this is a natural treatment if eq.~(\ref{eqn:energy}) is the integrated square residual of a PDE, which serves only as an approximation to the physics.

We observe that the selection of the trust has a direct effect on the variance of samples from the physics-informed prior.
In~\cite{alberts2023physics,hao2023information}, this effect is exploited to study the problem of detecting model-form error in numerical examples, and an analytical example for the Poisson equation can be found in~\cite{alberts2025interpretation}.
By inferring the trust as a hyperparameter in an inverse problem, we allow the model to automatically select the belief in the physics by scaling this variance.
In this way, the model calibrates the balance between the empirical data and the physics.
If the posterior identifies a low trust in the physics, this should serve as a flag that the model needs to be corrected.
Analytical examples show that the trust scales the prior covariance by $\beta^{-1}$ in the free theory, demonstrating this behavior.
Similarly, numerical experiments provide evidence that as the physics strays further from the truth, the model selects a lower trust, and the prior becomes flat.
On the other hand, as the physics becomes closer to reality, the trust grows, and the posterior begins to collapse to a single field.
In this paper, we study the existence and uniqueness of MAP estimates to inverse problems, under relatively light conditions.
This can readily be applied to the inverse problem of detecting model-form error.

\section{Some properties of MAP estimates of IFT inverse problems}
\label{sec:inverseprobs}

We derive some results related to MAP estimates of inverse problems under the IFT paradigm.
In particular, we define the concept of a weakly well-posed inverse problem under IFT and we state a few useful theorems for when an inverse problem becomes weakly well-posed.
Note that the results derived here extend to any physical parameters contained in eq.~(\ref{eqn:energy}) as well as the trust.
Unknown fields contained in the physics can also be inferred by first expressing the field as a linear combination of basis functions, where we look to identify the parameters of the basis, e.g., the Karhunen-Lo\`eve expansion of a Gaussian process.
We provide some useful results related to inverse problems where we infer arbitrary parameters. 

\subsection{Preliminaries}

The definition of an inverse problem in the IFT context is given by the state of knowledge about the hidden parameters $\lambda$ after observing the data.
This can be found by taking the marginal of the joint posterior $p(\phi,\lambda|d)$.
Integrating out the field, we obtain
\begin{equation}
\label{eqn:lambda_marginal}
p(\lambda|d) = \int\mathcal{D}\phi\: p(\phi,\lambda|d) = \int\mathcal{D}\phi\: \frac{1}{p(d)}p(d|\phi)\frac{\exp\left\{-\left[H(\phi;\lambda)+H(\lambda)\right]\right\}}{Z(\lambda)}.
\end{equation}
Interpreting eq.~(\ref{eqn:lambda_marginal}) is not trivial due to the path integrals which may not be tractable outside of free theory.
However, we will see this is not an issue as the theorems derived here only depend on the derivatives of the \emph{parameter posterior potential} $H(\lambda;d) \coloneqq -\log p(\lambda|d)$, for which there are readily available formulae.

As with any inverse problem, we may commonly find ourselves in the case where the problem is not well-posed in the deterministic sense of Hadamard~\cite{hadamard1902problemes}.
An inverse problem is well-posed if the solution exists, is unique, and is stable with respect to the data.
This definition is not particularly useful in the context of Bayesian inference, as the posterior having any amount of uncertainty causes the inverse problem to be ill-posed.
Therefore, for a Bayesian inverse problem, well-posedness is characterized by the existence, uniqueness, and stability of the \emph{posterior} with respect to a probabilistic metric, e.g., the Hellinger distance~\cite{stuart2010inverse}.
This definition is fairly loose and is satisfied under most Bayesian inverse problems~\cite{latz2020well}.
Also the well-posedness of a Bayesian inverse problem depends on the specific choice of the prior.
One could therefore imagine a transformation of the prior where the inverse problem is no longer Bayesian well-posed.
When we work under IFT, we explicitly make the assumption that the inverse problem is well-posed in the Bayesian sense.
For these reasons, we relax the definition and define the concept of a \emph{weakly} well-posed inverse problem, which we simply take to be the posterior MAP estimate of a given posterior exists and is unique.

To use IFT in theoretical investigations, we derive theorems to provide conditions for which a given inverse problem is weakly well-posed.
Precisely, we have:
\begin{definition}[Weakly well-posed inverse problem]
\label{def:well-posed}
An inverse problem under IFT is weakly well-posed if the parameter posterior potential $H(\lambda;d)$ has a unique minimum.
This statement is equivalent to the marginal posterior over the parameters $p(\lambda|d)$ being unimodal.
\end{definition}

\begin{remark}
    Our definition of a well-posed inverse problem is related to the concept of the identifiability of parameters~\cite{ying1991asymptotic, zhang2004inconsistent, daly2018inference}.
    To summarize, a parameter is identifiable if we can uniquely learn the true value of said parameter in the limit of infinite data, under the assumption that the model is correct.
    Def.~\ref{def:well-posed} does not require infinite data or convergence to the ground truth parameters.
    Further, we allow for the case of an imperfect model.
    In fact, as elaborated on in Sec.~\ref{sec:modelerrordisc}, the problem of identifying when the model is incorrect can be posed as an inverse problem under IFT.
    We show a case where this problem becomes weakly well-posed in Sec.~\ref{sec:example}.
\end{remark}

We can now prove the following:

\begin{proposition}
    \label{lem:lemma1}
    Suppose that the Hessian of the parameter posterior potential, given by $\nabla_{\lambda}^2H(\lambda;d)$, is positive definite.
    If there exists $\lambda^*$ such that $\nabla_{\lambda}H(\lambda^*;d) = 0$, then the inverse problem is weakly well-posed and $\lambda^*$ is the unique MAP estimate.
\end{proposition}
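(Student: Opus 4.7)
The plan is to reduce the statement to the standard fact from convex analysis that a strictly convex function admits at most one stationary point, which must then be its unique global minimizer. Under Definition~\ref{def:well-posed}, a unique minimizer of $H(\lambda;d)$ is exactly what "weakly well-posed" means, so once strict convexity and the minimizing property of $\lambda^*$ are in hand, the conclusion is immediate.

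First I would promote the pointwise positive-definiteness hypothesis to a global convexity statement. Assuming the Hessian $\nabla_\lambda^2 H(\lambda;d)$ is positive definite for every $\lambda$ in the parameter domain (which I take to be an open convex subset on which $H$ is $C^2$), I would apply the second-order Taylor expansion with integral remainder along the segment joining any two distinct points $\lambda_1,\lambda_2$,
\begin{equation*}
H(\lambda_2;d) = H(\lambda_1;d) + \nabla_\lambda H(\lambda_1;d)^{\dagger}(\lambda_2-\lambda_1) + \int_0^1 (1-t)(\lambda_2-\lambda_1)^{\dagger}\nabla_\lambda^2 H(\lambda_t;d)(\lambda_2-\lambda_1)\,dt,
\end{equation*}
with $\lambda_t = \lambda_1 + t(\lambda_2-\lambda_1)$. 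Since the Hessian is positive definite along the segment, the quadratic form under the integral is strictly positive, which gives strict convexity of $H(\cdot;d)$ by the usual characterization.

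Next I would invoke the stationary-point characterization for strictly convex functions. Because $H(\cdot;d)$ is $C^1$ and strictly convex, the first-order condition is necessary and sufficient for a global minimum: any $\lambda^*$ with $\nabla_\lambda H(\lambda^*;d)=0$ satisfies
\begin{equation*}
H(\lambda;d) > H(\lambda^*;d) + \nabla_\lambda H(\lambda^*;d)^{\dagger}(\lambda-\lambda^*) = H(\lambda^*;d)
\end{equation*}
for all $\lambda \neq \lambda^*$, so $\lambda^*$ is a global minimizer. Uniqueness follows from the same strict inequality: no other point can achieve the same value. Combining these, $\lambda^*$ is the unique global minimizer of $H(\lambda;d)$, which by Definition~\ref{def:well-posed} establishes that the inverse problem is weakly well-posed, with $\lambda^*$ the unique MAP estimate.

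The only real subtlety is the regularity setting: if $\lambda$ contains field-valued components living in an infinite-dimensional Banach space, one should interpret $\nabla_\lambda^2 H$ as a Fréchet second derivative and read "positive definite" as bounded below by a coercive quadratic form; in the finite-dimensional case (e.g. $\lambda=\beta$ alone, or $\lambda$ a finite collection of scalar/basis coefficients as discussed in the paragraph preceding the proposition) there is nothing to worry about. I expect the authors intend the finite-dimensional reading, in which case the Taylor-with-remainder argument goes through verbatim and the main "obstacle" is just spelling out this convex-analysis skeleton cleanly.
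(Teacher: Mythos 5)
Your proposal is correct and follows exactly the same route as the paper's proof, which simply cites ``trivial facts from the theory of convex functions'': positive-definite Hessian $\Rightarrow$ strict convexity $\Rightarrow$ a stationary point is the unique global minimizer, hence Definition~\ref{def:well-posed} applies. You merely spell out the Taylor-with-remainder and first-order-condition details that the authors leave implicit, so there is nothing substantive to distinguish the two arguments.
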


\begin{proof}
    The proof uses trivial facts from the theory of convex functions.
    Since the Hessian is positive definite, the potential is strictly convex.
    This implies that the $\lambda^*$ which makes $\nabla_{\lambda}H(\lambda^*;d) = 0$ is the unique global minimizer.
    Hence, Def.~\ref{def:well-posed} applies, and the inverse problem is weakly well-posed.
\end{proof}

So, to study the well-posedness of inverse problems we need to characterize both the gradient $\nabla_{\lambda}H(\lambda;d)$ and the Hessian $\nabla^2_{\lambda}H(\lambda;d)$ of the parameter posterior potential $H(\lambda;d)$.
Expressions for these can be found using properties of expectations.
To make progress, we must first state a couple of definitions.
\begin{definition}[Conditional field expectations]
    \label{def:expect}
    Let $F : \mathcal{X} \to \mathbb{R}$ be a functional, and let $p(\phi | \lambda)$ and $p(\phi | d, \lambda)$ denote the field prior and posterior probability measure with fixed $\lambda$, respectively.
    The conditional expectation of $F$ over the field prior is given by the path integral
    $$
    \mathbb{E}\left[F(\phi)\middle|\lambda\right] \coloneqq \int\mathcal{D}\phi\:F(\phi)p(\phi|\lambda).
    $$
    Further, the conditional expectation of $F$ over the field posterior is
    $$
    \mathbb{E}\left[F(\phi)\middle|d,\lambda \right] \coloneqq \int\mathcal{D}\phi\: F(\phi)p(\phi|d,\lambda).
    $$
\end{definition}

In~\cite{alberts2023physics}, an expression for $\nabla_{\lambda}H(\lambda|d)$ is derived using the above expectations.
We restate the proof so that our work is self-contained and because an identity found in the proof is needed to derive the Hessian.
\begin{lemma}
    \label{eqn:unbiased_grad_lambda}
    The gradient of the parameter posterior potential is:
    $$
    \nabla_{\lambda}H(\lambda;d) = \mathbb{E}\left[\nabla_{\lambda}H(\phi;\lambda)\middle| d,\lambda\right] - \mathbb{E}\left[\nabla_{\lambda}H(\phi;\lambda)\middle| \lambda \right]  + \nabla_{\lambda} H(\lambda).
    $$
\end{lemma}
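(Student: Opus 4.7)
The plan is to start from the definition $H(\lambda;d) = -\log p(\lambda \mid d)$ and split it into a prior piece, a marginal likelihood piece, and the log evidence $\log p(d)$, which is constant in $\lambda$ and therefore drops out under $\nabla_\lambda$. This immediately yields the $\nabla_\lambda H(\lambda)$ term by differentiating $-\log p(\lambda)$, so the real work is to show that $-\nabla_\lambda \log p(d \mid \lambda)$ equals the difference of the two conditional expectations in the claim.

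For that step, I would write
\begin{equation*}
  p(d \mid \lambda) \;=\; \int \mathcal{D}\phi \; p(d\mid\phi)\, p(\phi\mid\lambda)
  \;=\; \int \mathcal{D}\phi \; p(d\mid\phi)\, \frac{\exp\{-H(\phi;\lambda)\}}{Z(\lambda)},
\end{equation*}
and differentiate the integrand directly. The key identity I need is
$\nabla_\lambda \log Z(\lambda) = -\mathbb{E}[\nabla_\lambda H(\phi;\lambda)\mid \lambda]$, which follows by differentiating eq.~(\ref{eqn:pathintegral}) under the integral sign (a standard path-integral manipulation used repeatedly in IFT) and recognizing the resulting ratio as a prior expectation per Def.~\ref{def:expect}. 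Combining this with the product-rule expression
\begin{equation*}
  \nabla_\lambda p(\phi\mid\lambda) \;=\; p(\phi\mid\lambda)\,\Bigl(-\nabla_\lambda H(\phi;\lambda) + \mathbb{E}[\nabla_\lambda H(\phi;\lambda)\mid\lambda]\Bigr)
\end{equation*}
gives $\nabla_\lambda$ of $p(\phi\mid\lambda)$ in a form ready to be inserted back into the integral for $\nabla_\lambda p(d\mid\lambda)$.

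The next step is to divide through by $p(d\mid\lambda)$ so that the weight $p(d\mid\phi)p(\phi\mid\lambda)/p(d\mid\lambda)$ can be identified with the field posterior $p(\phi\mid d, \lambda)$ via Bayes's rule. The term containing $\nabla_\lambda H(\phi;\lambda)$ then becomes $\mathbb{E}[\nabla_\lambda H(\phi;\lambda)\mid d,\lambda]$, while the prior-expectation term, being independent of $\phi$, simply pulls out of the integral and is left unchanged. Flipping the sign (since we want $-\nabla_\lambda \log p(d\mid\lambda)$) and adding $\nabla_\lambda H(\lambda)$ produces exactly the stated identity.

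The main obstacle is purely formal: justifying the interchange of $\nabla_\lambda$ with the path integral defining $Z(\lambda)$ and with the marginalization integral for $p(d\mid\lambda)$. In rigorous terms this would require dominated-convergence-type hypotheses on the $\lambda$-dependence of $H(\phi;\lambda)$, but within the IFT conventions adopted in the paper this is handled at the level of the formal path-integral calculus, just as the partition function itself is; once that interchange is granted, the rest is a mechanical rearrangement using only Bayes's rule and the definitions in Def.~\ref{def:expect}.
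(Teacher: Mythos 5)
Your proposal is correct and follows essentially the same route as the paper's proof: both differentiate under the path integral, use $\nabla_\lambda \log Z(\lambda) = -\mathbb{E}[\nabla_\lambda H(\phi;\lambda)\mid\lambda]$, and identify the normalized integrand as the field posterior $p(\phi\mid d,\lambda)$ via Bayes's rule. The only cosmetic difference is that you factor out $\log p(d)$ and $-\log p(\lambda)$ at the start and differentiate the marginal likelihood $p(d\mid\lambda)$, whereas the paper differentiates the marginal posterior $p(\lambda\mid d)=\int\mathcal{D}\phi\,p(\phi,\lambda\mid d)$ directly and lets the $\nabla_\lambda H(\lambda)$ term emerge from the product rule.
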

\begin{proof}
    We find the gradient element-wise.
    To begin,
    $$
    \frac{\partial H(\lambda;d)}{\partial\lambda_i} =  -\frac{\partial }{\partial \lambda_i}\log p(\lambda|d) = -\frac{\partial }{\partial \lambda_i} \log\int\mathcal{D}\phi\: p(\phi,\lambda|d),
    $$
    and by the chain rule, we have
    \begin{equation}
    \label{eqn:step1}
        \frac{\partial H(\lambda;d)}{\partial\lambda_i} = -\frac{\frac{\partial }{\partial \lambda_i}\int\mathcal{D}\phi\;p(\phi,\lambda|d)}{\int\mathcal{D}\phi\;p(\phi,\lambda|d)}
        =-\frac{\frac{\partial }{\partial \lambda_i}\int\mathcal{D}\phi\;p(\phi,\lambda|d)}{p(\lambda|d)}.
    \end{equation}
    We evaluate the numerator as
    $$
    \frac{\partial }{\partial \lambda_i}\int\mathcal{D}\phi\;p(\phi,\lambda|d) = \frac{\partial }{\partial \lambda_i}\int\mathcal{D}\phi\: \frac{1}{p(d)}p(d|\phi)\frac{\exp\left\{-\left[H(\phi;\lambda)+H(\lambda)\right]\right\}}{Z(\lambda)}.
    $$
    Now, if we pass $\partial/\partial\lambda_i$ through the integral, and apply the product rule and chain rule successively, we find
    \begin{align*}
        \frac{\partial }{\partial \lambda_i}\int\mathcal{D}\phi\;p(\phi,\lambda|d) = 
        \int\mathcal{D}\phi\:\frac{1}{p(d)}p(d|\phi)\Bigg\{
        &-\frac{\exp\left\{-\left[H(\phi;\lambda)+H(\lambda)\right]\right\}}{Z(\lambda)}\left[\frac{\partial H(\phi;\lambda)}{\partial\lambda_i}+\frac{\partial H(\lambda)}{\partial\lambda_i}\right]\\
        &- \frac{\exp\left\{-\left[H(\phi;\lambda)+H(\lambda)\right]\right\}}{Z(\lambda)}\frac{1}{Z(\lambda)}\frac{\partial Z(\lambda)}{\partial\lambda_i}
        \Bigg\}.
    \end{align*}
    Observe that the ratio $Z(\lambda)^{-1}\exp\left\{-\left[H(\phi;\lambda)+H(\lambda)\right]\right\}$, which appears in both terms, when multiplied by $p(d)^{-1}p(d|\phi)$ results in joint $p(\phi,\lambda|d)$.
    So, dividing the joint by the denominator of eq.~(\ref{eqn:step1}) the result is $p(\phi|d,\lambda)$ by Bayes's rule.
    Hence, we have shown 
    \begin{align*}
    \frac{\partial H(\lambda ;d)}{\partial \lambda_i} &=  \mathbb{E}\left[\frac{\partial H(\phi;\lambda)}{\partial \lambda_i} + \frac{\partial H(\lambda)}{\partial \lambda_i} + \frac{1}{Z(\lambda)}\frac{\partial Z(\lambda)}{\partial \lambda_i} \middle | d,\lambda\right] \\
    &= \mathbb{E}\left[ \frac{\partial H(\phi;\lambda)}{\partial \lambda_i}\middle | d,\lambda\right] + \frac{\partial H(\lambda)}{\partial \lambda_i} + \frac{1}{Z(\lambda)}\frac{\partial Z(\lambda)}{\partial \lambda_i},
    \end{align*}
    where we pulled the terms which do not depend on $\phi$ out of the field expectations.
    The derivative of the partition function gives
    \begin{align}
        \frac{1}{Z(\lambda)}\frac{\partial Z(\lambda)}{\partial \lambda_i} &= -\frac{1}{Z(\lambda)} \int \mathcal{D}\phi\: \exp\{-H(\phi;\lambda)\}\frac{\partial H(\phi;\lambda)}{\partial\lambda_i} \nonumber \\
        &= -\int \mathcal{D}\phi\: p(\phi|\lambda)\frac{\partial H(\phi;\lambda)}{\partial\lambda_i} \nonumber \\
        &= -\mathbb{E}\left[\frac{\partial H(\phi;\lambda)}{\partial\lambda_i} \middle|\lambda\right], \label{eqn:partderiv}
    \end{align}
    which completes the proof.
\end{proof}
This expression removes the problematic dependency on the partition function since $H(\phi;\lambda)$ is simply the field energy scaled by the model trust, i.e., $H(\phi;\lambda)=\beta E(\phi;u)$.
Under free theory, the expectations are taken over Gaussian random fields, and can be computed analytically.
We derive a similar result for the Hessian by relating $\nabla^2H(\lambda;d)$ to a covariance.
To this end, we need to define the concepts of posterior and prior covariance of a vector operator.
\begin{definition}[Conditional field covariance]
    \label{def:cov}
    Let $G : \mathcal{X} \to \mathbb{R}^m$ be an $m$-dimensional vector operator.
    Let $p(\phi|\lambda)$ and $p(\phi | d,\lambda)$ denote the field prior and posterior probability measures with fixed $\lambda$, respectively.
    We define the prior covariance of $G$ over the field prior to be the $m\times m$ matrix
    $$
    \mathbb{C}[G(\phi) | \lambda] ] \coloneqq \mathbb{E}\left[(G(\phi) - \mathbb{E}[G(\phi) |\lambda])(G(\phi) - \mathbb{E}[G(\phi) |\lambda])^T\middle| \lambda\right].
    $$
    Similarly, the posterior covariance of $G$ over the field posterior is
    $$
    \mathbb{C}[G(\phi) |d, \lambda] \coloneqq \mathbb{E}\left[(G(\phi) - \mathbb{E}[G(\phi) |d,\lambda])(G(\phi) - \mathbb{E}[G(\phi) |d,\lambda])^T\middle| d,\lambda\right].
    $$
\end{definition}

Using these definitions, we derive an equation for the desired Hessian:

\begin{lemma}
\label{thm:hessian}
    The Hessian of the parameter posterior potential is:
    \begin{align}
        \nabla_{\lambda}^2H(\lambda;d) &= \mathbb{C}\left[\nabla_\lambda H(\phi;\lambda)\middle|\lambda\right] -\mathbb{C}\left[\nabla_\lambda H(\phi;\lambda)\middle|d,\lambda\right] \nonumber \\
        &- \mathbb{E}\left[\nabla_{\lambda}^2H(\phi;\lambda)\middle|\lambda\right] + \mathbb{E}\left[\nabla_{\lambda}^2H(\phi;\lambda) \middle|d,\lambda\right] \nonumber \\
        &+\nabla^2_\lambda H(\lambda). \label{eqn:hessian}
    \end{align}
\end{lemma}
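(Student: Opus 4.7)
The plan is to differentiate the gradient formula from the previous lemma component-wise and then reorganize the resulting terms into the covariance plus expected-Hessian structure displayed in the statement.

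First, I would apply $\partial/\partial\lambda_j$ to each of the three pieces of
$$\frac{\partial H(\lambda;d)}{\partial\lambda_i} = \mathbb{E}\left[\tfrac{\partial H(\phi;\lambda)}{\partial\lambda_i}\middle|d,\lambda\right] - \mathbb{E}\left[\tfrac{\partial H(\phi;\lambda)}{\partial\lambda_i}\middle|\lambda\right] + \tfrac{\partial H(\lambda)}{\partial \lambda_i}.$$
The last term is trivial: it produces the $\nabla^2_{\lambda}H(\lambda)$ contribution. For the prior expectation, I push $\partial/\partial\lambda_j$ inside the path integral, use the product rule on $\nabla_{\lambda_i}H(\phi;\lambda)\,p(\phi|\lambda)$, and use the log-derivative identity
$$\tfrac{\partial}{\partial\lambda_j}\log p(\phi|\lambda) = -\tfrac{\partial H(\phi;\lambda)}{\partial\lambda_j} - \tfrac{1}{Z(\lambda)}\tfrac{\partial Z(\lambda)}{\partial\lambda_j}.$$
The partition-function term is precisely eq.~(\ref{eqn:partderiv}) from the previous proof, which rewrites $Z^{-1}\partial_{\lambda_j}Z$ as $-\mathbb{E}[\partial_{\lambda_j}H(\phi;\lambda)|\lambda]$. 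Collecting gives
$$\tfrac{\partial}{\partial\lambda_j}\mathbb{E}\!\left[\tfrac{\partial H(\phi;\lambda)}{\partial\lambda_i}\middle|\lambda\right] = \mathbb{E}\!\left[\tfrac{\partial^2 H(\phi;\lambda)}{\partial\lambda_i\partial\lambda_j}\middle|\lambda\right] - \mathbb{C}[\nabla_\lambda H(\phi;\lambda)|\lambda]_{ij},$$
where the second piece is recognized as a covariance entry via $\mathbb{E}[XY]-\mathbb{E}[X]\mathbb{E}[Y]$.

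Next I would repeat the computation for the posterior expectation. The key observation is that the posterior has the same exponential-family structure as the prior: writing $p(\phi|d,\lambda) = p(d|\phi)\exp\{-H(\phi;\lambda)\}/\tilde Z(\lambda,d)$ with $\tilde Z(\lambda,d)=\int\mathcal{D}\phi\: p(d|\phi)\exp\{-H(\phi;\lambda)\}$, the same derivation (since $p(d|\phi)$ is independent of $\lambda$) yields $\tilde Z^{-1}\partial_{\lambda_j}\tilde Z = -\mathbb{E}[\partial_{\lambda_j}H(\phi;\lambda)|d,\lambda]$, so the analogous identity produces
$$\tfrac{\partial}{\partial\lambda_j}\mathbb{E}\!\left[\tfrac{\partial H(\phi;\lambda)}{\partial\lambda_i}\middle|d,\lambda\right] = \mathbb{E}\!\left[\tfrac{\partial^2 H(\phi;\lambda)}{\partial\lambda_i\partial\lambda_j}\middle|d,\lambda\right] - \mathbb{C}[\nabla_\lambda H(\phi;\lambda)|d,\lambda]_{ij}.$$
Subtracting this and the prior version according to the signs in the gradient formula, and adding $\partial_{\lambda_i\lambda_j}H(\lambda)$, recovers exactly eq.~(\ref{eqn:hessian}) entrywise.

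The step that takes the most care is the posterior expectation differentiation, since its measure carries its own $\lambda$-dependent normalization. The trick is to recognize that this normalization's log-derivative is itself a field expectation in the posterior, computable by the same partition-function identity used for the prior; once this is in hand, the rest is bookkeeping. I would also note the (standard) assumption that differentiation can be exchanged with the path integrals, justified in free theory and in the finite-dimensional truncations used throughout the IFT construction.
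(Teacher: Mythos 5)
Your proposal is correct and follows essentially the same route as the paper's Appendix~\ref{apdx:hessian}: differentiate the gradient formula of Lemma~\ref{eqn:unbiased_grad_lambda} entrywise, push the derivative through the path integrals, reuse the partition-function identity eq.~(\ref{eqn:partderiv}), and recognize the resulting $\mathbb{E}[XY]-\mathbb{E}[X]\mathbb{E}[Y]$ combinations as covariances. Your introduction of the posterior normalizer $\tilde Z(\lambda,d)$ is a somewhat tidier way to organize the posterior-expectation term than the paper's repeated Bayes-rule manipulations, but it is the same argument in substance.
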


\begin{proof}
    See Appendix~\ref{apdx:hessian}.
\end{proof}

The form of the Hessian is illuminating and it allows us to derive a theorem that is useful for showing uniqueness in the common case of field potentials that are linear in the parameters.
We also remark that Lemma~\ref{thm:hessian} has potential to be useful in numerical algorithms.
In~\cite{alberts2023physics}, an SGLD scheme is used to draw samples from the posterior for inverse problems using a noisy estimate of the expectations appearing in Lemma~\ref{eqn:unbiased_grad_lambda}.
A discussion on exploiting second-order gradient information as a preconditioning matrix to improve the speed and accuracy of SGLD algorithms is provided in~\cite{li2016preconditioned}.
Lemma~\ref{thm:hessian} gives an explicit expression for the Hessian using expectations, which can be approximated directly from the samples.
Clearly, there is potential to improve such an SGLD approach by exploiting this expression, but developing such an algorithm is beyond the scope of this work.

\subsection{Weakly well-posed inverse problems}

Before posing a theorem on the uniqueness of MAP estimates to inverse problems in the case of field potentials which are linear in the parameters, we need to first state the following definitions.

\begin{definition}[Data density]
The data density is the probability density of the process that generates the data according to the assumed model, i.e.,
$$
p(d) = \int \mathcal{D}\phi\:\int d\lambda\: p(d|\phi)p(\phi|\lambda)p(\lambda).
$$
\end{definition}

\begin{definition}[Informative data]
    \label{def:informative}
    We say that the data are informative about the parameters for a given inverse problem if, for all data $d$ in the support of the data density, and for all parameters $\lambda$ in the support of the prior, the difference between the prior and posterior covariance matrices of the gradient of the field potential is positive definite.
    That is, the data are informative if
    $$
    \mathbb{C}\left[\nabla_\lambda H(\phi;\lambda)\middle|\lambda\right] - \mathbb{C}\left[\nabla_\lambda H(\phi;\lambda)\middle|d,\lambda\right] \succ 0.
    $$
\end{definition}

To understand this definition consider two extreme cases.
First, take the case when the posterior $p(\phi|d,\lambda)$ is the same as the prior $p(\phi|\lambda)$.
The difference between the two covariance terms in the above definition is clearly zero and the data are not informative about the parameters.
Second, suppose that the prior potential $H(\phi;\lambda)$ does not depend on the parameters.
Then, both covariance terms vanish identically and the data are not informative about the parameters.

Next, we first derive a theorem for the conditions in which the data are informative about the parameters.
We make use of the following notation:
\begin{notation}
    Let $S_{\lambda} : \mathcal{X} \to \mathcal{X}$ be an integral operator with kernel $s(\cdot,\cdot;\lambda)$, i.e.,
    $$
    (S_{\lambda}\phi)(x) \coloneqq \int s(x,y;\lambda)\phi(y)dy, \quad \phi \in \mathcal{X}.
    $$
    By $\nabla_{\lambda}S_{\lambda}$, we refer to the operator given by
    $$
    (\nabla_{\lambda}S_{\lambda}\phi)(x) \coloneqq \int \nabla_{\lambda}s(x,y;\lambda) \phi(y)dy, \quad \phi \in \mathcal{X}.
    $$
\end{notation}
The covariances in Lemma~\ref{thm:hessian} can be explicitly calculated under free theory using operator calculus described in~\cite{leike2016operator}.
We provide a brief primer to the technique, along with the proof, in Appendix~\ref{apdx:covariances}.
\begin{lemma}
\label{lem:covariance}
    Assume we are working in the free theory regime, following Def.~\ref{def:freetheory}, so that $H(\phi;\lambda) = 1/2\phi^{\dagger}S^{-1}_{\lambda}\phi$.
    Then, the covariance of \(\nabla_{\lambda}H(\phi;\lambda)\), taken over the prior $\mathcal{N}(0,S_{\lambda})$ is
    \begin{align*}
    \mathbb{C}\left[\nabla_{\lambda}H(\phi;\lambda)|\lambda\right] &= \frac{1}{2} \mathrm{tr}\left( \nabla_{\lambda} S^{-1}_{\lambda} S_{\lambda} \nabla_{\lambda} S^{-1}_{\lambda} S_{\lambda} \right),
    \end{align*}
    and the covariance of \(\nabla_{\lambda}H(\phi;\lambda)\), taken over the posterior $\mathcal{N}(\Tilde{m}_{\lambda}, \Tilde{S}_{\lambda})$ is
    \begin{align*}
    \mathbb{C}\left[\nabla_{\lambda}H(\phi;\lambda)|d, \lambda\right] &= \Tilde{m}_{\lambda}^{\dagger} \nabla_{\lambda} S_{\lambda}^{-1} \Tilde{S}_{\lambda} \nabla_{\lambda} S_{\lambda}^{-1} \Tilde{m}_{\lambda} \\
    &\quad + \frac{1}{2} \mathrm{tr}\left( \nabla_{\lambda} S^{-1}_{\lambda} \Tilde{S}_{\lambda} \nabla_{\lambda} S^{-1}_{\lambda} \Tilde{S}_{\lambda} \right).
    \end{align*}
\end{lemma}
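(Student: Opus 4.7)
The plan is to exploit the fact that under free theory $\nabla_\lambda H(\phi;\lambda)$ is, componentwise, a quadratic form in the Gaussian-distributed field $\phi$, and then apply the standard Isserlis/Wick identity for second moments of quadratic forms. Since $H(\phi;\lambda)=\tfrac12\phi^\dagger S_\lambda^{-1}\phi$ depends on $\lambda$ only through $S_\lambda^{-1}$, differentiating componentwise gives
$$
\frac{\partial H(\phi;\lambda)}{\partial \lambda_i} \;=\; \tfrac12\,\phi^\dagger A_i\,\phi, \qquad A_i := \frac{\partial S_\lambda^{-1}}{\partial \lambda_i},
$$
which is a centred-at-origin quadratic form with self-adjoint kernel $A_i$. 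The whole lemma then reduces to the covariance of two such quadratic forms under a Gaussian measure on $\mathcal{X}$.

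Next I would invoke (after setting up the necessary operator-calculus machinery from the cited Leike reference in Appendix~\ref{apdx:covariances}) the infinite-dimensional analogue of the identity
$$
\mathrm{Cov}\bigl(\phi^\dagger A_i\phi,\ \phi^\dagger A_j\phi\bigr) \;=\; 2\,\mathrm{tr}\bigl(A_i\, S\, A_j\, S\bigr) \;+\; 4\, m^\dagger A_i\, S\, A_j\, m,
$$
valid for $\phi\sim\mathcal{N}(m,S)$. In finite dimensions this follows directly from Isserlis' theorem by counting Wick contractions; the operator-calculus primer in the appendix is what lifts it to path-integral expectations, provided the relevant products of $A_i$ with covariance operators are trace class. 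Multiplying by the $\tfrac12\cdot\tfrac12$ prefactors from the two gradients gives the prefactors $\tfrac12$ and $1$ seen in the lemma.

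The prior case is immediate: set $m=0$ and $S=S_\lambda$. The mean term vanishes, and the surviving trace yields
$$
\mathbb{C}\bigl[\nabla_\lambda H(\phi;\lambda)\,\big|\,\lambda\bigr] \;=\; \tfrac12\,\mathrm{tr}\bigl(\nabla_\lambda S_\lambda^{-1}\, S_\lambda\, \nabla_\lambda S_\lambda^{-1}\, S_\lambda\bigr),
$$
where $A_iSA_jS$ is packaged as the $(i,j)$-entry of the symbolic matrix trace $\nabla_\lambda S_\lambda^{-1} S_\lambda \nabla_\lambda S_\lambda^{-1} S_\lambda$. For the posterior, apply the same identity with $m=\tilde m_\lambda$ and $S=\tilde S_\lambda$ (using Def.~\ref{def:freetheory}); both the trace term and the quadratic mean term survive, producing the two-term expression in the lemma statement.

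The main obstacle is not the algebra but the rigorous execution in the functional setting: I need the operators $\nabla_\lambda S_\lambda^{-1}\,S_\lambda$ and $\nabla_\lambda S_\lambda^{-1}\,\tilde S_\lambda$ to be Hilbert--Schmidt so that the traces in Isserlis' formula actually exist, and I need to justify interchanging $\partial/\partial\lambda_i$ with the path integral defining $S_\lambda^{-1}$. Both points are handled by the operator calculus developed in~\cite{leike2016operator} and deferred to Appendix~\ref{apdx:covariances}; once that bookkeeping is in place, the identification of the prior/posterior covariances as the quoted traces and mean-quadratic term is a direct Wick computation.
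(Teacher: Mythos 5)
Your proposal is correct and rests on the same underlying mathematics as the paper's proof: both reduce to Wick contractions of the Gaussian quadratic forms $\tfrac12\phi^{\dagger}(\partial S_{\lambda}^{-1}/\partial\lambda_i)\phi$, and your prefactor bookkeeping ($\tfrac14\cdot 2$ for the trace term, $\tfrac14\cdot 4$ for the mean term) reproduces the stated coefficients exactly. The only difference is presentational: you invoke the classical covariance-of-quadratic-forms identity $\mathrm{Cov}(\phi^{\dagger}A_i\phi,\phi^{\dagger}A_j\phi)=2\,\mathrm{tr}(A_iSA_jS)+4\,m^{\dagger}A_iSA_jm$ as a known result, whereas the paper re-derives it from scratch by expanding $\mathbb{E}[\nabla H]$ and $\mathbb{E}[\nabla H^T\nabla H]$ term by term with the creation/annihilation operator calculus of Leike and subtracting.
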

\begin{proof}
    See Appendix~\ref{apdx:covariances}.
\end{proof}
The expressions in Lemma~\ref{lem:covariance} provide explicit representations of the covariances which appear in the Hessian of the parameter posterior potential, see Lemma~\ref{thm:hessian}.
These forms can be studied to derive situations in which the data are informative about the parameters.
Observe that Def.~\ref{def:informative} is satisfied when
\begin{align}
    \label{eqn:informativecond}
    \frac{1}{2} \mathrm{tr}\left( \nabla_{\lambda} S^{-1}_{\lambda} S_{\lambda} \nabla_{\lambda} S^{-1}_{\lambda} S_{\lambda} \right) - \Tilde{m}_{\lambda}^{\dagger} \nabla_{\lambda} S_{\lambda}^{-1} \Tilde{S}_{\lambda} \nabla_{\lambda} S_{\lambda}^{-1} \Tilde{m} - \frac{1}{2} \mathrm{tr}\left( \nabla_{\lambda} S^{-1}_{\lambda} \Tilde{S}_{\lambda} \nabla_{\lambda} S^{-1}_{\lambda} \Tilde{S}_{\lambda} \right) \succ 0.
\end{align}
Both of the negative terms in this expression contain $\Tilde{S}_{\lambda}$, which is the posterior covariance operator.
We will exploit the fact that under certain conditions, in the limit of infinite observations the posterior covariance kernel of a Gaussian process, and hence the associated covariance operator, $\Tilde{S}_{\lambda}$ vanishes.
Since the remaining term is positive, the result will be a positive-definite quantity.

\begin{proposition}
    \label{thm:informative_data}
    Suppose we are working under free theory.
    If, in the limit of infinite observations, the posterior covariance kernel vanishes, i.e., $||\Tilde{s}^{1/2}_{\lambda}||_{L^2(\Omega)} \to 0$ as $n \to \infty$, then the data are informative about the parameters given sufficient data.
\end{proposition}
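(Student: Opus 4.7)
The plan is to combine Lemma~\ref{lem:covariance} with the stated limiting behavior of $\tilde{S}_\lambda$ to show that, as $n\to\infty$, the posterior covariance of $\nabla_\lambda H(\phi;\lambda)$ vanishes while the prior covariance remains strictly positive definite. Since positive definiteness is preserved under sufficiently small perturbations, this will imply the inequality for all sufficiently large $n$, which is exactly the statement of the proposition.

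First, I would substitute the free-theory expressions from Lemma~\ref{lem:covariance} into the defining inequality of Def.~\ref{def:informative}, so that the quantity to be shown positive definite is
\begin{equation*}
\tfrac{1}{2}\operatorname{tr}\!\bigl(\nabla_\lambda S_\lambda^{-1}S_\lambda \nabla_\lambda S_\lambda^{-1}S_\lambda\bigr) - \tilde{m}_\lambda^\dagger \nabla_\lambda S_\lambda^{-1}\tilde{S}_\lambda \nabla_\lambda S_\lambda^{-1}\tilde{m}_\lambda - \tfrac{1}{2}\operatorname{tr}\!\bigl(\nabla_\lambda S_\lambda^{-1}\tilde{S}_\lambda \nabla_\lambda S_\lambda^{-1}\tilde{S}_\lambda\bigr).
\end{equation*}
The prior term on the left is independent of $n$, while both negative contributions carry at least one factor of $\tilde{S}_\lambda$. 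By hypothesis $\|\tilde{s}_\lambda^{1/2}\|_{L^2(\Omega)}\to 0$, which I would translate into $\|\tilde{S}_\lambda\|\to 0$ in operator norm (and in trace norm, since the operator is trace-class). Assuming $\nabla_\lambda S_\lambda^{-1}$ has bounded components and $\tilde{m}_\lambda$ remains bounded in the $S_\lambda^{-1}$-induced norm, both the bilinear mean term and the trace-of-quadratic term are dominated by $\|\tilde{S}_\lambda\|$ and thus tend to zero as $n\to\infty$.

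Next, I would show that the prior term is strictly positive definite. Writing $M_i = \partial_{\lambda_i} S_\lambda^{-1}$, the $(i,j)$ entry equals $\tfrac{1}{2}\operatorname{tr}(M_i S_\lambda M_j S_\lambda)$. Using that $S_\lambda$ is positive and trace-class, so $S_\lambda^{1/2}$ exists, this reads
\begin{equation*}
\tfrac{1}{2}\operatorname{tr}\!\bigl(S_\lambda^{1/2}M_i S_\lambda^{1/2}\cdot S_\lambda^{1/2} M_j S_\lambda^{1/2}\bigr) = \tfrac{1}{2}\langle A_i, A_j\rangle_{HS},
\end{equation*}
where $A_i := S_\lambda^{1/2} M_i S_\lambda^{1/2}$ and $\langle\cdot,\cdot\rangle_{HS}$ is the Hilbert--Schmidt inner product. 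Hence the matrix is Gramian and therefore positive semidefinite; strict positivity follows from the implicit nondegeneracy assumption that the family $\{\partial_{\lambda_i} S_\lambda^{-1}\}_i$ is linearly independent, so that the $A_i$ are linearly independent as Hilbert--Schmidt operators.

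Finally, because the set of positive definite matrices is open, and the correction terms vanish in the operator-norm limit, there exists $N$ such that for all $n\geq N$ the full difference remains positive definite for every $d$ in the support of the data density and every $\lambda$ in the support of the prior. This yields the ``sufficient data'' statement. The main obstacle I anticipate is the qualitative-to-quantitative step connecting $\|\tilde{s}_\lambda^{1/2}\|_{L^2}\to 0$ to operator-norm (or trace-norm) control of $\tilde{S}_\lambda$ inside the composite expressions $\nabla_\lambda S_\lambda^{-1}\tilde{S}_\lambda\nabla_\lambda S_\lambda^{-1}$; one needs $\nabla_\lambda S_\lambda^{-1}$ to be well enough behaved relative to $S_\lambda^{-1}$ for the bilinear and trace terms to actually vanish, together with uniform boundedness of $\tilde{m}_\lambda$, which in principle should follow from the free-theory formula $\tilde{m}_\lambda = \tilde{S}_\lambda R^*\Gamma^{-1}d$ combined with stability of the Bayesian estimator in $n$.
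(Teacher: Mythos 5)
Your proposal follows essentially the same route as the paper's proof: substitute the free-theory covariance expressions from Lemma~\ref{lem:covariance} into Def.~\ref{def:informative}, observe that the prior term is positive definite and independent of $n$ while both negative terms carry factors of $\Tilde{S}_{\lambda}$ that vanish as $n\to\infty$, and conclude by openness of the positive-definite cone. The only difference is that you make explicit two points the paper treats as immediate --- the Gram-matrix argument (with its attendant linear-independence requirement on $\{\partial_{\lambda_i}S_{\lambda}^{-1}\}$) for strict positive definiteness of the prior term, and the passage from $\|\Tilde{s}_{\lambda}^{1/2}\|_{L^2(\Omega)}\to 0$ to control of the composite operator expressions --- which is a refinement rather than a different argument.
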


\begin{proof}
    In order to show that the data are informative about the parameters, we must show
    $$
    \mathbb{C}\left[\nabla_\lambda H(\phi;\lambda)\middle|\lambda\right] - \mathbb{C}\left[\nabla_\lambda H(\phi;\lambda)\middle|d,\lambda\right] \succ 0.
    $$
    Expressions for the above covariances under free theory are given in Lemma~\ref{lem:covariance}, so the condition which must be satisfied is eq.~(\ref{eqn:informativecond}).
    The first term is trivially positive-definite, and does not depend on $n$.
    Since $||\Tilde{s}^{1/2}_{\lambda}||_{L^2(\Omega)} \to 0$ as $n \to \infty$, the associated posterior covariance operator $\Tilde{S}_{\lambda}$ also vanishes.
    This implies that the second and third terms can be taken to be arbitrarily small, depending on $n$.
    Then, given sufficient observations, eq.~(\ref{eqn:informativecond}) is positive-definite, and the data are informative about the parameters.
\end{proof}

The requirement for the posterior covariance collapsing to $0$ in the limit of infinite data is not satisfied in general.
However, the work found in~\cite{teckentrup2020convergence} provides some conditions for which this holds.
The theorems proved in that work are designed for situations in which the GP prior contains hyperparameters, in our case $\lambda$, which are estimated as a part of the inverse problem.
The conditions on the hyperparameters are lenient, and convergence holds under many common training schemes, for example a maximum likelihood or MAP estimate.
In fact, posterior convergence is independent on the hyperparameter training scheme.
To this end, we introduce the following definitions related to reproducing kernel Hilbert spaces (RKHSs), which can be found in~\cite{kanagawa2018gaussian}.
\begin{definition}[Kernel function]
    A symmetric mapping $s : \Omega \times \Omega \to \mathbb{R}$ is said to be a kernel function if for all $n \in \mathbb{N}$, $(c_1,\dots,c_n) \subset \mathbb{R}$, and $(x_1,\dots,x_n) \subset \Omega$, we have
    $$
    \sum_{i=1}^n\sum_{j=1}^n c_ic_js(x_i,x_j) \geq 0.
    $$    
\end{definition}

\begin{definition}[Reproducing kernel Hilbert space]
    \label{def:rkhs}
    Let $s: \Omega \times \Omega \to \mathbb{R}$ be a kernel function.
    A Hilbert space $H_s$ on $\Omega$ with inner product $\langle\cdot,\cdot\rangle_s$ is said to be a reproducing kernel Hilbert space if the following properties hold
    \begin{enumerate}[(i)]
        \item For all fixed $y \in \Omega$, $s(\cdot,y) \in H_s$.
        \item For all fixed $y \in \Omega$ and for all $\phi \in H_s$, $\phi(y) = \langle\phi, s(\cdot,y)\rangle_s$.
    \end{enumerate}  
\end{definition}

The second property of Def.~\ref{def:rkhs} is known as the reproducing property, and the kernel that satisfies this condition is known as the reproducing kernel.
The RKHS $H_s$ is uniquely determined by $s$, and the reverse also holds.
That is, given any kernel function $s$, there exists a RKHS such that $s$ is the reproducing kernel.
By definition for a kernel $s$ with RKHS $H_s$, each $\phi \in H_s$ can be written as $\phi = \sum_{i=1}^{\infty}c_is(\cdot,x_i)$ for some $c_i \in \mathbb{R}$, $x_i \in \Omega$, $i=1,2,\dots$, and $\|\phi\|_{H_s} < \infty$.
Therefore, one can observe that the fields within the RKHS have the same properties of $s$, e.g. regularity.

Next, we also need to place conditions on how the data are collected for convergence to hold.
In short, the data must be collected with some uniformity and follow a space-filling design.
\begin{definition}
    Let $X_n = (x_1,\dots,x_n) \subset \Omega$ denote the measurement locations.
    The \emph{fill distance} is given by
    $$
    h_{X_n} \coloneqq \sup_{x\in\Omega} \inf_{x_i\in X_n} \|x-x_i\|,
    $$
    the \emph{separation radius} is defined as
    $$
    r_{X_n} \coloneqq \frac{1}{2}\min_{i\neq j} \|x_i-x_j\|,
    $$
    and the \emph{mesh ration} is calculated as
    $$
    \rho_{X_n}\coloneqq \frac{h_{X_n}}{r_{X_n}}.
    $$
\end{definition}
The fill distance measures the maximum distance any $x\in\Omega$ can be from a data point $x_i \in X_n$, and the separation radius measures half the distance between distinct measurement locations.
Observe that both $h_{X_n}$ and $r_{X_n}$ go to $0$ as the number of data points go to infinity under a space-filling scheme, e.g., a uniform grid.
We will assume the measurements are taken uniformly so that $\rho_{X_n}$ is constant with $n$ to simplify calculations.
We use the following:
\begin{theorem}[Theorem 3.5~\cite{teckentrup2020convergence}]
    \label{thm:meanconv}
    Take $(\hat{\lambda}_i)_{i=1}^{\infty} \subseteq \Lambda$, $\Lambda \subseteq \mathrm{dom}(\lambda)$, compact, to be a sequence of estimates for $\lambda$.
    Assume the following conditions hold:
    \begin{enumerate}[(i)]
        \item $\Omega$ is compact with Lipschitz boundary and satisfies the interior cone condition.
        \item The RKHS associated with $s_{\lambda}$ is isomorphic to the Sobolev space $W^{\tau(\lambda)}(\Omega)$ for some $\tau(\lambda) \in \mathbb{N}$.
        \item The ground truth field $\phi^* \in W^{\bar{\tau}}(\Omega)$ for a $\bar{\tau} = \alpha + \eta$, where $\alpha \in \mathbb{N}$, $\alpha > d_{\Omega} / 2$, and $0 \leq \eta < 1$.
        \item The prior mean $m_{\lambda} \in W^{\bar{\tau}}(\Omega)$ for each $\lambda \in \Lambda$.
        \item There exists an $N^* \in \mathbb{N}$ such that the quantities $\tau^- \coloneqq \inf_{n\geq N^*} \tau(\hat{\lambda}_n)$ and $\tau^+ \coloneqq \sup_{n\geq N^*}\tau(\hat{\lambda}_n)$ satisfy $\Tilde{\tau} = \alpha' + \eta'$ for $\alpha' \in \mathbb{N}$, $\alpha' > d/2$ and $0 \leq \eta ' <1$.
    \end{enumerate}
    Then there exists a constant $C$, independent of $\phi^*$, $m_{\lambda}$, and $n$ such that for each $p \leq \bar{\tau}$, we have
    \begin{equation}
        \label{eqn:meanbound}
        \|\phi^* - \Tilde{m}_{\hat{\lambda}_n}\|_{W^p(\Omega)} \leq Ch_{X_n}^{\min\{\bar{\tau}, \tau^--p\}}\rho_{X_n}^{\max\{\tau^+-\bar{\tau},0\}}\left(\|\phi^*\|_{W^{\bar{\tau}}(\Omega)} + \sup_{n\geq N^*} \|m_{\hat{\lambda}_n}\|_{W^{\bar{\tau}}(\Omega)}\right),
    \end{equation}
    for all $n\geq N^*$ and $h_{X_n} \leq h_0$.
\end{theorem}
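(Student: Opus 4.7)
The plan is to establish this convergence result through the classical machinery that connects Gaussian process regression to kernel interpolation in a reproducing kernel Hilbert space, combined with a sampling inequality for Sobolev functions on well-distributed point sets. Let $e_n \coloneqq \phi^* - \tilde{m}_{\hat{\lambda}_n}$ denote the error. First, I would exploit the GP--RKHS correspondence: under (ii), the posterior mean $\tilde{m}_{\hat{\lambda}_n}$ is (up to the prior mean) the kernel ridge regression estimator associated with the reproducing kernel $s_{\hat{\lambda}_n}$, evaluated on the residuals $\phi^*(X_n) - m_{\hat{\lambda}_n}(X_n)$ (plus noise). The strategy is to control $\|e_n\|_{W^p}$ by (a) showing that the values of $e_n$ on $X_n$ are small relative to the kernel regularity, and (b) lifting this pointwise control to the full Sobolev norm via a sampling inequality.

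Second, I would invoke a Narcowich--Wendland--Ward style sampling inequality: if $\Omega$ satisfies the interior cone condition, as guaranteed by (i), then for every $u \in W^{\bar\tau}(\Omega)$ with $\bar\tau > d_\Omega/2$ and every $p \leq \bar\tau$, one has
$$
\|u\|_{W^p(\Omega)} \leq C\, h_{X_n}^{\bar\tau - p}\,\|u\|_{W^{\bar\tau}(\Omega)} + C\,h_{X_n}^{-p}\,\max_{x_i \in X_n}|u(x_i)|,
$$
provided $h_{X_n} \leq h_0$. I would apply this to $u = e_n$, decomposing the bound into a Sobolev norm contribution of $e_n$ and a data-fidelity term. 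The data-fidelity piece is controlled because the kernel regression is a near-interpolant of $\phi^*$ at $X_n$, so its defect is dictated by the noise level and vanishes in the appropriate limit.

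Third, I would control $\|e_n\|_{W^{\bar\tau}}$ uniformly in $n$ by an RKHS-native-space argument. Since the kernel interpolant is the minimum-RKHS-norm function fitting the data, $\|e_n\|_{H_{s_{\hat\lambda_n}}} \leq \|\phi^* - m_{\hat\lambda_n}\|_{H_{s_{\hat\lambda_n}}}$ (suitably adjusted for noise). Using (ii), (iv), and the compactness of $\Lambda$ from (v), the norms $\|\cdot\|_{H_{s_{\hat\lambda_n}}}$ and $\|\cdot\|_{W^{\tau(\hat\lambda_n)}}$ are equivalent with constants uniform over $n \geq N^*$, and when $\tau(\hat\lambda_n) \leq \bar\tau$ we get direct control by $\|\phi^*\|_{W^{\bar\tau}} + \sup_n \|m_{\hat\lambda_n}\|_{W^{\bar\tau}}$. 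When $\tau(\hat\lambda_n) > \bar\tau$, $\phi^*$ need not lie in the native space, so I would instead project onto $H_{s_{\hat\lambda_n}}$ using a smoothing operator built from $X_n$ and pay a mesh-ratio factor $\rho_{X_n}^{\tau(\hat\lambda_n) - \bar\tau}$ coming from inverse estimates between Sobolev scales. Combining these two regimes inside $n \geq N^*$ produces the $\min\{\bar\tau,\tau^- - p\}$ exponent on $h_{X_n}$ and the $\max\{\tau^+ - \bar\tau, 0\}$ exponent on $\rho_{X_n}$ in (\ref{eqn:meanbound}).

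The main obstacle, in my view, is not any single estimate but the uniform bookkeeping forced by the sequence $(\hat\lambda_n)$: since the kernel (and therefore the RKHS) drifts with $n$, one cannot simply fix a single native space and invoke standard interpolation bounds. The critical work is in handling the two regimes $\tau(\hat\lambda_n) \leq \bar\tau$ and $\tau(\hat\lambda_n) > \bar\tau$ simultaneously while keeping the constant $C$ independent of $\phi^*$, $m_{\hat\lambda_n}$, and $n$. Condition (v) supplies exactly the uniform smoothness range needed to close this, and packaging the two exponents into the final bound is the delicate step that yields the precise form stated.
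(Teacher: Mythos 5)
The paper does not prove this statement at all: it is imported verbatim as Theorem~3.5 of~\cite{teckentrup2020convergence}, so there is no internal proof to compare your argument against. That said, your sketch is a faithful reconstruction of the proof architecture actually used in that reference: the error is controlled by a Narcowich--Wendland--Ward sampling inequality on a domain satisfying the interior cone condition, the $W^{\bar\tau}$-norm of the error is bounded via the native-space (minimum-norm) characterization of the kernel estimator together with the norm equivalence supplied by assumption (ii), and the misspecified regime $\tau(\hat\lambda_n)>\bar\tau$ is handled by ``escaping the native space,'' which is precisely where the mesh-ratio factor $\rho_{X_n}^{\max\{\tau^+-\bar\tau,0\}}$ enters; assumption (v) is what makes the resulting constants uniform along the drifting sequence of kernels. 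Two caveats: your outline is a strategy rather than a proof (the band-limited/smoothing projection in the oversmoothed regime and the uniformity of the norm-equivalence constants over $\Lambda$ each require real work), and you gloss over the distinction between noise-free interpolation and regression with a nugget --- the defect of the ``near-interpolant'' at $X_n$ does not simply vanish when $\Gamma\neq 0$, and the cited source treats that case separately. Neither caveat is a flaw relative to this paper, which simply quotes the result.
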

\begin{theorem}[Theorem 3.8~\cite{teckentrup2020convergence}]
    \label{thm:covconv}
    Let the assumptions of Theorem~\ref{thm:meanconv} hold.
    Then there exists a constant $C$, independent of $n$, such that
    \begin{equation}
        \label{eqn:covbound}
        \|\Tilde{s}_{\hat{\lambda}_n}^{1/2}\|_{L^2(\Omega)} \leq Ch_{X_n}^{\min\{\bar{\tau},\tau^-\} - d/2 - \varepsilon}\rho_{X_n}^{\max\{\tau^+-\bar{\tau},0\}},
    \end{equation}
    for each $n \geq N^*$, $h_{X_n} \leq h_0$, and $\varepsilon > 0$.
\end{theorem}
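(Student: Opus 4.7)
The plan is to adapt the classical theory of scattered-data approximation in Sobolev spaces, following the arguments of~\cite{teckentrup2020convergence}. The key object is the posterior variance $x \mapsto \Tilde{s}_{\hat{\lambda}_n}(x,x)$, which I would control pointwise by (a noise-corrected version of) the squared \emph{power function} $P_{s,X_n}(x)^2$ of kernel interpolation with kernel $s$ on the node set $X_n$. Integrating this pointwise bound over the compact domain $\Omega$ then yields the stated $L^2$ estimate on $\Tilde{s}_{\hat{\lambda}_n}^{1/2}$.

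First, I would derive a pointwise variance bound in terms of $h_{X_n}$. Under hypotheses (i) and (ii), the RKHS of $s_{\hat{\lambda}_n}$ is norm-equivalent to $W^{\tau(\hat{\lambda}_n)}(\Omega)$, and a standard sampling inequality in Sobolev spaces yields $P_{s_{\hat{\lambda}_n},X_n}(x)^2 \leq C h_{X_n}^{2\tau(\hat{\lambda}_n) - d_{\Omega} - 2\varepsilon}$ uniformly in $x \in \Omega$. The $\varepsilon>0$ absorbs the loss from the continuous Sobolev embedding required to evaluate the variance pointwise in dimension $d_{\Omega}$, and hypothesis (i) on $\partial\Omega$ is what allows the sampling inequality to extend up to the boundary.

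Second, I would handle the $n$-dependence of the estimated smoothness through the mesh ratio. Since $\tau(\hat{\lambda}_n) \in [\tau^-,\tau^+]$ for $n \geq N^*$, one uses a bandlimited-extension (Narcowich--Ward--Wendland type) argument to exchange the native space of $s_{\hat{\lambda}_n}$ for a reference Sobolev space of order $\tau^-$. Each such exchange costs a factor of the form $\rho_{X_n}^{\tau^+-\tau^-}$, which is precisely the mesh-ratio penalty appearing in~(\ref{eqn:covbound}). The saturation cap at $\min\{\bar{\tau},\tau^-\}$, and the corresponding cap $\max\{\tau^+-\bar{\tau},0\}$ on the mesh-ratio exponent, reflect the standard saturation phenomenon in kernel approximation: once the assumed smoothness exceeds the intrinsic regularity $\bar{\tau}$ of the ground truth (hypothesis (iii)), no further gain is possible and the rate cannot exploit smoothness beyond $\bar{\tau}$.

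Third, I would pass from the pointwise variance bound to the $L^2$ norm. Since $\Omega$ is compact, $\|\Tilde{s}^{1/2}_{\hat{\lambda}_n}\|_{L^2(\Omega)}^2 = \int_{\Omega} \Tilde{s}_{\hat{\lambda}_n}(x,x)\,dx \leq |\Omega|\sup_{x\in\Omega}\Tilde{s}_{\hat{\lambda}_n}(x,x)$, and the desired bound follows by taking square roots. The main obstacle is the second step: carefully tracking the mesh-ratio cost requires delicate norm-equivalence arguments and bandlimited extension operators between Sobolev spaces of different orders, which is the technical heart of~\cite{teckentrup2020convergence}. Once those estimates are granted, the overall bound is essentially an integration of the pointwise power-function estimate combined with the saturation cap; the parallel bound on $\|\phi^* - \Tilde{m}_{\hat{\lambda}_n}\|_{W^p(\Omega)}$ in Theorem~\ref{thm:meanconv} is proved by the same machinery applied to the mean, and the two results share essentially all of their hard work.
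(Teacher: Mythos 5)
This statement is Theorem~3.8 of the cited reference~\cite{teckentrup2020convergence}; the paper itself offers no proof of it and simply imports it, so there is no in-paper argument to compare against. Your outline does track the strategy of the cited source: a pointwise bound on the posterior variance via the power function and a Sobolev sampling inequality (with the $\varepsilon$ absorbing the embedding loss and hypothesis (i) controlling behaviour up to $\partial\Omega$), native-space/Sobolev norm equivalences from hypothesis (ii), mesh-ratio penalties of Narcowich--Ward--Wendland type to handle the $n$-dependent smoothness $\tau(\hat{\lambda}_n)\in[\tau^-,\tau^+]$, and finally integration over the compact domain; the exponent arithmetic ($P^2\lesssim h^{2\tau-d}$, hence $h^{\tau-d/2-\varepsilon}$ for the standard deviation) is consistent with eq.~(\ref{eqn:covbound}). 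Two caveats. First, your explanation of the cap $\min\{\bar{\tau},\tau^-\}$ as ``saturation at the intrinsic regularity of the ground truth'' is not quite right for this theorem: the posterior covariance of a Gaussian process with Gaussian likelihood depends only on the design $X_n$ and the kernel, not on the observed values or on $\phi^*$, so $\bar{\tau}$ cannot enter the variance bound through the mechanism you describe (it genuinely does enter the mean bound of Theorem~\ref{thm:meanconv} that way); the stated exponent is simply no larger than $\tau^- - d/2 - \varepsilon$ and so remains valid. Second, your write-up openly defers the ``technical heart'' (the norm-equivalence and bandlimited-extension estimates) to the reference, so it is a proof outline rather than a proof --- which is acceptable here given that the paper does exactly the same, but you should be explicit that you are reconstructing the cited argument rather than supplying an independent one.
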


We now briefly discuss some of the conditions of Theorems~\ref{thm:meanconv} and~\ref{thm:covconv} under the IFT framework.
The first assumption is related to the regularity of the domain, and is trivially satisfied in most physical problems.
Likewise, assumption (iv) trivially holds under our free theory assumption, which enforces that $m_\lambda \equiv 0$.

Assumption (iii) is an assumption on the regularity of the ground truth being emulated.
Namely that $\phi^*$ must belong to a Sobolev space with a high-enough order.
Under our formulation of IFT, we are concerned with learning the solutions to PDEs, so this assumption is fairly easy to satisfy.
For example, the solutions to the heat equation and wave equation often belong to $W^2(\Omega)$~\cite{brezis2011functional}, which satisfies this condition up to $d_{\Omega} = 4$.
Also note that the condition holds for any field which is sufficiently regular and not just a solution to the assumed model.
This allows the results to hold even under the situation of model-form error, a primary interest of this work.

Next, assumption (v) places a constraint on how the hyperparameters $\lambda$ are learned.
One may view the two quantities $\tau^-$ and $\tau^+$ as equivalent to $\liminf \tau(\hat{\lambda}_n)$ and $\limsup \tau(\hat{\lambda}_n)$, respectively.
In this way, we observe that the assumption is a statement that the RKHS of the prior covariance $s_{\hat{\lambda}_n}$ must satisfy a certain smoothness as $\hat{\lambda}_n$ adjusts in order for the conclusion to hold.
This assumption follows immediately if $\lambda$ is fixed.

Finally, assumption (ii), which states that the RKHS of the prior covariance must be isomorphic to a Sobolev space, is the most difficult to verify.
It is known that this holds for classes of the Mat{\'e}rn and separable Mat{\'e}rn kernel families, including the exponential and Gaussian kernels~\cite{kanagawa2018gaussian}.
So, if the IFT prior is taken to be a Gaussian random field with such a covariance kernel (without the physics), then this assumption holds.

Of particular interest to our applications within IFT is the theory developed in~\cite{fasshauer2013reproducing}.
Here, there is a discussion on the RKHSs defined by Green's functions of PDEs.
Specifically, under relatively mild conditions for a differential operator $L$ and the boundary conditions, if the Green's function of $L$ is an even function, then the RKHS space of said Green's function is isomorphic to a Sobolev space (or even equivalent).
Under our framework of IFT, the prior is often constructed using a field energy principle for a PDE, and for certain operators the prior becomes a Gaussian process where the covariance kernel is a common Green's function scaled by $\beta^{-1}$, see for instance~\cite{alberts2023physics,ensslin2019information}.
Other approaches a priori take the covariance kernel to be a Green's function~\cite{albert2019gaussian, ranftl2022connection}.
Therefore if the specific Green's function in the application at hand induces a RKHSs which satisfies (ii), then we are free to apply Proposition~\ref{thm:informative_data}.

Note that the condition for the RKHS of the prior covariance kernel being isomorphic to a Sobolev space is not a strict requirement for the data to be informative about the parameters.
It may not be necessary in general for $\Tilde{S}_{\lambda}$ to vanish.
The only requirement is that the prior covariance of $\nabla_{\lambda}H(\phi;\lambda)$ is larger than the posterior covariance.
Intuitively, we would expect this to be true, as having access to more data should make us more sure about what we are inferring under a Bayesian framework, which should reduce the variance.
Determining if Def.~\ref{def:informative} holds can also be checked by numerically evaluating the expressions provided in eq.~(\ref{eqn:informativecond}).

We can now state the main theorem of the paper:
\begin{theorem}
\label{thm:theorem}
Suppose that the field potential is linear in the parameters $\lambda$, the parameter prior is constant, and the data are informative about the parameters.
If the gradient of the parameter posterior potential has a root, then the inverse problem is weakly well-posed under IFT.
\end{theorem}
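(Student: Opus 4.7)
The plan is to reduce the statement to an application of Proposition~\ref{lem:lemma1}. By that proposition, once we show the Hessian $\nabla_{\lambda}^2 H(\lambda;d)$ is positive definite everywhere, then the existence of a root of the gradient immediately yields uniqueness of the MAP estimate and hence weak well-posedness in the sense of Def.~\ref{def:well-posed}. So the entire proof is to show positive definiteness of the Hessian under the three hypotheses.

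To do this, I would start from the general expression for the Hessian given in Lemma~\ref{thm:hessian},
\begin{align*}
    \nabla_{\lambda}^2H(\lambda;d) &= \mathbb{C}\left[\nabla_\lambda H(\phi;\lambda)\middle|\lambda\right] -\mathbb{C}\left[\nabla_\lambda H(\phi;\lambda)\middle|d,\lambda\right] \\
    &\quad - \mathbb{E}\left[\nabla_{\lambda}^2H(\phi;\lambda)\middle|\lambda\right] + \mathbb{E}\left[\nabla_{\lambda}^2H(\phi;\lambda) \middle|d,\lambda\right] +\nabla^2_\lambda H(\lambda),
\end{align*}
and kill the terms that vanish under our hypotheses. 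First, linearity of the field potential $H(\phi;\lambda)$ in $\lambda$ gives $\nabla_{\lambda}^2 H(\phi;\lambda) \equiv 0$ pointwise in $\phi$, so both conditional expectations of the Hessian of the field potential are identically zero. Second, constancy of the parameter prior means $H(\lambda) = -\log p(\lambda)$ is constant (up to an additive constant), so $\nabla_{\lambda}^2 H(\lambda) = 0$.

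What remains is exactly the difference of covariances
\[
\nabla_{\lambda}^2H(\lambda;d) = \mathbb{C}\left[\nabla_\lambda H(\phi;\lambda)\middle|\lambda\right] -\mathbb{C}\left[\nabla_\lambda H(\phi;\lambda)\middle|d,\lambda\right],
\]
which is positive definite by the assumption that the data are informative about the parameters (Def.~\ref{def:informative}). Combining this with the assumed existence of $\lambda^*$ with $\nabla_{\lambda}H(\lambda^*;d) = 0$, Proposition~\ref{lem:lemma1} applies and the inverse problem is weakly well-posed.

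The argument is short because the heavy lifting was done in Lemmas~\ref{eqn:unbiased_grad_lambda} and~\ref{thm:hessian} and in the definition of informative data; the only obstacle I anticipate is being careful about what \emph{linear in $\lambda$} means for a field potential whose $\lambda$-dependence can be in multiple coordinates (e.g.\ $\beta$ and $u$ jointly), so I would explicitly note that linearity of $H(\phi;\cdot)$ as a map from the parameter space into functionals on $\mathcal{X}$ is what is needed to make $\nabla_{\lambda}^2 H(\phi;\lambda)$ vanish identically. Once that is pinned down, no further computation is required.
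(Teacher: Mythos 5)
Your proposal is correct and follows essentially the same route as the paper: invoke Lemma~\ref{thm:hessian}, note that linearity of $H(\phi;\lambda)$ in $\lambda$ and constancy of the prior annihilate the last three terms of the Hessian, use Def.~\ref{def:informative} to conclude positive definiteness of the remaining covariance difference, and finish with Proposition~\ref{lem:lemma1}. The only cosmetic difference is that the paper writes the linearity explicitly as $H(\phi;\lambda)=\sum_i\lambda_i F_i(\phi)$, which matches your own clarifying remark about what linearity in the parameter vector must mean.
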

\begin{proof}
Since the field potential is linear in the parameters, it can be decomposed into:
$$
H(\phi;\lambda) = \sum_{i=1}^m\lambda_i F_i(\phi),
$$
where $F_i \vcentcolon \mathcal{X} \to \mathbb{R}$ are suitable field functionals.
Then, the last three terms of Lemma~\ref{thm:hessian} vanish.
Since the data are informative about the parameters, the difference between the remaining two terms is a positive definite matrix.
Therefore, the parameter posterior potential $H(\lambda;d)$ is strictly convex.
Let $\lambda^*$ be such that $\nabla_{\lambda}H(\lambda^*;d)=0$.
Then, Proposition~\ref{lem:lemma1}, applies: $\lambda^*$ is the unique global minimizer of $H(\lambda;d)$, and the inverse problem is weakly well-posed.
\end{proof}

It is important to note here that Theorem~\ref{thm:theorem} is written in such a way that it holds even outside of free theory.
Although, outside of free theory, the condition that the data is informative about the parameters is much more difficult to verify.
Similar results can be derived for other common choices of priors for the parameters $\lambda$.
For example, if the parameters are each given a standard normal prior, then the Hessian $\nabla^2_{\lambda}H(\lambda)$ which appears in Lemma~\ref{thm:hessian} is simply an identity matrix, so the potential $H(\lambda;d)$ remains strictly convex if all other conditions remain the same.

We remark here that Theorem ~\ref{thm:theorem} covers many common cases within scientific and engineering applications.
What matters is the linearity of the parameters $\lambda$, and not the linearity of the field operators $F_i$, which are permitted to appear in a nonlinear way.
Oftentimes, the parameters of the physics appear in a linear manner, or a suitable change of variables may be made, where the new variables are linear, regardless of the linearity of the field functionals.
Usually, the functionals describe the field energy.
If the field energy is the variational form of the PDE, then parameters in the field operators retain their linearity coming from the PDE.
There are many examples found in engineering applications where the energy is linear in $\lambda$, including the heat equation, the wave equation, the equations of linear elasticity, the Allen-Cahn equation, among others.
Furthermore, this can apply to situations in which the field functionals contain additional fields that we infer as part of the inverse problem.
These fields (for example a spatial varying thermal conductivity) can be parameterized with a set of basis functions.
Alternatively, one could treat these fields as a finite-dimensional distribution of a Gaussian process.
To approximate this field, we simply infer the parameters of the basis functions, which appear in a linear way.

Caution must be taken in cases where the parameters do not appear in linear way.
For example, if the energy which goes into the prior is the integrated squared residual of a PDE, then there is a risk that the parameters multiply each other (if there are multiple parameters).
Furthermore, if we would like to infer the inverse temperature $\beta$, then simultaneously inferring any other parameters which go into the physics automatically makes the problem nonlinear, and Theorem~\ref{thm:theorem} does not apply.

\section{A free theory example}
\label{sec:example}
To demonstrate how the results of Sec.~\ref{sec:inverseprobs} above can be applied in theoretical investigations, we proceed with a free theory example of detecting model-form error under the Poisson equation.
Suppose we have a process which generates a data vector, $d_n = R_n\phi^* + \gamma$, $d_n\in\mathbb{R}^n$, where $R_n$ represents a linear measurement operator, taken to be point measurements of the field, and $\gamma \sim \mathcal{N}(0,\sigma^2I_n)$ is a noise process.
We let $\phi^*$ denote the true field which generates the data.
We believe that the field satisfies the two-dimensional Poisson equation with Dirichlet boundaries, i.e. $\phi^*$ is the unique solution to
\begin{equation}
    \label{eqn:poisson}
    -\nabla^2 \phi + q = 0,
\end{equation}
where $q \vcentcolon \Omega \to \mathbb{R}$ is the source term, subject to the boundary condition $\phi = 0$ on $\partial\Omega$.
The minus sign here enforces the differential operator defining eq.~(\ref{eqn:poisson}) to be positive-definite.
Given the data, we would like verify if eq.~(\ref{eqn:poisson}) represents the real, underlying ground-truth physics.
Is this a weakly well-posed problem?

In this situation, we are detecting model-form error, which we approach through the lens of IFT, as discussed in Sec.~\ref{sec:modelerrordisc}.
To detect model-form error, we begin by placing a flat prior on the trust, and look to infer $\beta$.
Equation~\ref{eqn:poisson} is known to have a variational formulation, which can serve as the basis for a physics-informed prior.
According to Dirichlet's principle~\cite{brezis2011functional}, it is $E(\phi) = \frac{1}{2} (\nabla \phi)^{\dagger}\nabla \phi - q^\dagger \phi$.
The energy defines a physics-informed prior of the type eq.~(\ref{eqn:piftprior}), where the the prior potential will be $H(\phi;\beta) = \beta E(\phi)$.
Under our formulation of IFT, the problem of detecting model-form error is solved by identifying $\beta$ as an inverse problem.
We will explore the weakly well-posedness of this inverse problem under IFT by applying Theorem~\ref{thm:theorem}.

The first step is to show that we are indeed working under free theory.
Define the operator $L(x,x') = -\delta(x-x')\nabla^2_{x'}$.
Then, one can show that an equivalent expression for the energy is $E(\phi) = \frac{1}{2} \phi^{\dagger} L \phi - q^{\dagger} \phi$.
To see this, begin by applying integration by parts on the first term of $E(\phi)$:
\begin{align*}
    (\nabla \phi)^{\dagger}\nabla \phi &= \int_{\Omega} \nabla \phi \cdot \nabla \phi \:dx\\
    &=
    \int_{\partial \Omega} \phi \: \nabla \phi \cdot \mathbf{n} \:dx - \int_{\Omega} \phi \: \nabla ^2 \phi \: dx,
\end{align*}
and the integral along the boundary vanishes from the boundary condition.
The equivalency can be shown using properties of Dirac's delta on the remaining integral:
$$
- \int_{\Omega} \phi \: \nabla ^2 \phi \:dx= -\int_{\Omega \times \Omega} \phi(x) \delta(x-x')\nabla_{x'}^2\phi(x')\:dxdx' = \phi^{\dagger}S^{-1}\phi.
$$

By completing the square, we identify that the potential is quadratic, and the IFT prior for this problem is a Gaussian random field.
The potential is 
$$
H(\phi;\beta) = \frac{1}{2} (\phi - Gq)^{\dagger} \beta L (\phi - Gq),
$$
where $G = L^{-1}$ is the integral operator with kernel $g$ given by the Green's function of eq.~(\ref{eqn:poisson}).
In the two-dimensional case, the Green's function is known to be $g(x,x') = -2\pi \log |x-x'|_2$~\cite[Chapter 8]{keener2018principles}.
Since the prior is a Gaussian random field, we write $p(\phi | \beta) = \mathcal{N}(\phi | Gq, \beta^{-1} G)$.

Notice that $\beta$, which does not appear in the mean function, is the only parameter we infer in the inverse problem.
To remain in free theory, we perform a change of variables, and infer the field $\psi = \phi - \mathbb{E}[\phi] = \phi - Gq$, so that $p(\psi | \beta) = \mathcal{N}(\psi|0, \beta^{-1} G).$
Since the measurement operator is linear, and the noise is Gaussian, we are under free theory, and the posterior is also a Gaussian random field, $p(\psi | d_n, \beta) = \mathcal{N}(\psi| \Tilde{m}_{\beta}, \Tilde{S}_{\beta})$, following Definition~\ref{def:freetheory}.

Since $\beta$ is the only parameter we infer, the field prior potential is linear in the parameters.
Further, we have placed a constant prior on $\beta$.
Then Theorem~\ref{thm:theorem} tells us we need to check two conditions.
First, we need to show that given sufficient measurements, the data are informative about $\beta$.
Second, we need to demonstrate the existence of a $\beta^*$ such that the gradient of the parameter posterior potential found in Lemma~\ref{eqn:unbiased_grad_lambda} vanishes.
Taking a flat prior for $\beta$ yields $\frac{\partial}{\partial \beta} H(\beta)=0$.
So, to demonstrate the existence of this $\beta^*$, we must show that there exists such a $\beta^*$ with:
\begin{equation}
    \label{eqn:cond2}
    \mathbb{E}\left[\frac{\partial}{\partial \beta} H(\psi;\beta) \middle|\beta=\beta^*\right] = \mathbb{E}\left[\frac{\partial}{\partial \beta} H(\psi;\beta) \middle| d,\beta=\beta^*\right].
\end{equation}
The gradient is straightforward, $\frac{\partial}{\partial \beta} H(\psi;\beta) = E(\psi)$, so eq.~(\ref{eqn:cond2}) says that the optimal choice of $\beta$ makes the expected prior and posterior energies equal.

To this end, we first evaluate the expectations.
Recall the definition of $\mathbb{E}[E(\psi);\beta]$:
$$
\mathbb{E}[E(\psi);\beta] = \int \mathcal{D}\psi \: E(\psi) p(\psi | \beta),
$$
and since $p(\psi | \beta) = \mathcal{N}(\psi|0, \beta^{-1} G)$, we note that this is simply an expectation over a Gaussian random field.
The same is true for the expectation taken over the posterior, and we find for both we must evaluate an expectation of the form $\mathbb{E}_{\psi\sim\mathcal{N}(m , D)}\left[ \frac{1}{2} \psi^{\dagger} S \psi \right]$, where $m$ and $D$ can be appropriately interchanged with the prior and posterior values.
We have already calculated expectations of this form, see eq.~(\ref{eqn:priorexp}) and eq.~(\ref{eqn:postexp}), where we simply need to make slight modifications.
The expressions give us
$$
\mathbb{E}[E(\psi)|\beta] = \frac{1}{2} \mathrm{tr}\left(\beta^{-1} G L\right)
$$
and
$$
\mathbb{E}[E(\psi)|d, \beta] = \frac{1}{2}\Tilde{m}_{\beta}^{\dagger} L \Tilde{m}_{\beta} + \frac{1}{2} \mathrm{tr}\left(\beta^{-1} G \Tilde{S}_{\beta}\right).
$$
For the expectation taken over the prior notice that since $G$ is the inverse of $L$, we have $\mathrm{tr}(GL) = \mathrm{tr}(\mathcal{I})$, the trace of the identity operator $\mathcal{I}$ on $\mathcal{X}$.
In the entire infinite-dimensional function space, $\mathrm{tr}(\mathcal{I})$ is infinite, and the expression is meaningless.
However, in practice we typically work on a finite-dimensional subspace of $\mathcal{X}$, a process called \emph{renormalization}, which is standard practice in quantum field theory applications~\cite[Chapter 10]{weinberg2005quantum}.
By doing so $\mathrm{tr}(\mathcal{I})$ is finite and equal to the number of dimensions of the subspace.
Typically, in IFT this is done by moving to the Fourier space, and truncating the space at the highest frequency of interest, as extreme frequencies are physically impossible.
For this reason we will take $\mathrm{tr}(\mathcal{I})$ to be finite.

We see that to prove the existence of a $\beta^*$ which makes the expectations equal, we must show there is a $\beta$ for which the following condition holds:
\begin{equation*}
    0 = \beta^{-1}\mathrm{tr}(\mathcal{I}) - \Tilde{m}_{\beta}^{\dagger} L \Tilde{m}_{\beta} - \beta^{-1} \mathrm{tr}\left(G \Tilde{S}_{\beta}\right).
\end{equation*}
Solving this expression for $\beta^{-1}$, we find
\begin{equation}
    \label{eqn:condition}
    \beta^{-1} = \frac{\Tilde{m}_{\beta}^{\dagger} L \Tilde{m}_{\beta} }{\mathrm{tr}\left(\mathcal{I}\right) - \mathrm{tr}\left(G\Tilde{S}_{\beta}\right)}.
\end{equation}
To understand eq.~(\ref{eqn:condition}) we study the extreme case of infinite data.

Begin by noting that the covariance kernel of the prior is simply the Green's function of eq.~(\ref{eqn:poisson}), scaled by the inverse trust.
The RKHS imposed by said Green's function is isomorphic to the Sobolev space $W^1(\Omega)$, see~\cite{fasshauer2013reproducing}.
Then, Theorems~\ref{thm:meanconv} and~\ref{thm:covconv} apply, and as $n \to \infty$ the posterior covariance vanishes, i.e. $||\Tilde{s}^{1/2}_{\beta}||_{L^2(\Omega)} \to 0$.
Furthermore under these conditions, the posterior mean converges to the underlying field which generated the data, independent of the choice of $\beta$ (provided that $\beta$ is not zero or infinite), i.e. $||\psi^* - \Tilde{m}_{\beta}||_{L^2(\Omega)} \to 0$.

The covariance vanishing in the limit of infinite data implies that the data are informative about $\beta$ under Proposition~\ref{thm:informative_data}, so the first requirement is satisfied.
Letting $n \to \infty$, and undoing the change of variables eq.~(\ref{eqn:condition}) becomes
\begin{equation}
    \label{eqn:optimalbeta}
    \beta^{*^{-1}} = \frac{1}{\mathrm{tr}(\mathcal{I})}\left(\phi^* - Gq\right)^{\dagger} L \left(\phi^* - Gq\right).
\end{equation}

From here, we make a few observations about the optimal value of trust.
If $\phi^*$ is exactly the field which solves eq.~(\ref{eqn:poisson}) (including boundary conditions), then $\phi^* = Gq$.
So the right-hand-side of eq.~(\ref{eqn:optimalbeta}) becomes zero, and there is no finite $\beta^*$ which satisfies the relationship.
The $\beta^*$ which solves the inverse problem must be infinite.
This is consistent with our interpretation of physics-informed IFT.
If we have selected the correct physical model, then IFT tells us to believe the physics to be true, and we should find the underlying field by directly solving the PDE.

Under model-form error, $\phi^*$ is not the solution of eq.~(\ref{eqn:poisson}), and $\phi^* - Gq \neq 0$.
Since $L$ is positive-definite and bounded, $0 < \left(\phi^* - Gq\right)^{\dagger} L \left(\phi^* - Gq\right) < \infty$ thus $\beta^*$ is finite and positive.
Therefore, under the limit of infinite data, Theorem~\ref{thm:theorem} applies, and the model-form error detection problem is a weakly well-posed inverse problem.
This unique value of trust tells us to what extent we should believe the physics to be true.
As $\phi^*$ gets closer to satisfying the chosen physical model, the difference between $\phi^*$ and the solution of eq.~(\ref{eqn:poisson}) goes to $0$.
As a result, the optimal trust is larger when the physics become more correct.

Finally, we remark that the optimal value of $\beta$ depends on the dimension of the subspace of $\mathcal{X}$ chosen since $\mathrm{tr}(\mathcal{I})$ appears in eq.~(\ref{eqn:condition}).
As the dimension of this subspace grows, the optimal value of $\beta$ also grows.
The intuitive understanding of this fact is that we should expect less model-form error, hence a larger $\beta$, as the mesh for $\mathcal{X}$ becomes finer.
That is, we induce less model-form error as we improve our ability to capture the finer features of the field.

We conclude with the case of an uninformative, Jeffreys prior on $\beta$.
That is $p(\beta)\propto 1/\beta$ or, equivalently, $H(\beta) = \log\beta$.
Then, Theorem~\ref{thm:theorem} does not apply and we have to rely on Proposition~\ref{lem:lemma1}.
The second derivative of the parameter posterior potential is:
$$
\frac{\partial^2}{\partial\beta^2} H(\beta;d) = \mathbb{C}[E(\psi)|\beta] - \mathbb{C}[E(\psi)|d,\beta] - \frac{1}{\beta^2}.
$$
We see that now it is not sufficient for the data to be informative about $\beta$.
Instead, the condition is that the difference between the prior and posterior energy variance should be greater than $1/\beta^2$ -- not just positive.
Observe how this quantity changes with $\beta$.
As $\beta$ increases, which corresponds to the assumption that the model is more correct, the required lower bound gets smaller.
As $\beta$ decreases towards zero, the required lower bound goes to infinity.
The latter condition reflects the fact that under a Jeffreys prior, $\beta$ cannot be exactly zero.

Under a Jeffreys prior, in order to make $\frac{\partial}{\partial \beta} H(\beta ;d)$ vanish, the optimal $\beta$ must satisfy:
$$
0 = \mathbb{E}[E(\psi)|\beta=\beta^*] - \mathbb{E}[E(\psi)|d,\beta=\beta^*] - \frac{1}{\beta^*}.
$$
The expectations are exactly the same as with the flat prior, which yields the condition
\begin{equation*}
    \beta^{-1} = \frac{\Tilde{m}^{\dagger}_{\beta} L \Tilde{m}_{\beta}}{\mathrm{tr}(\mathcal{I}) - \mathrm{tr}\left(G\Tilde{S}_{\beta}\right) -2}.
\end{equation*}
Taking the limit of infinite data and reverting to the original variables, we see the optimal trust is given by
\begin{equation*}
    \label{eqn:optimal_jeff_beta}
        \beta^{*^{-1}} = \frac{1}{\mathrm{tr}(\mathcal{I}) - 2}\left(\phi^* - Gq\right)^{\dagger} L \left(\phi^* - Gq\right).
\end{equation*}
To study well-posedness, we study the convexity of $H(\beta;d)$.
Evaluating $\frac{\partial^2}{\partial\beta^2} H(\beta;d)$ at this $\beta^*$, according to Lemma~\ref{thm:hessian}, and using the expressions for the covariances derived in Lemma~\ref{lem:covariance}, we find
$$
\frac{\partial^2}{\partial\beta^2} H(\beta = \beta^*|d) = \frac{1}{2} \left(\beta^{*^{-1}}\right)^2\left(\mathrm{tr}(\mathcal{I})-1\right).
$$
Therefore, we find that a sufficient condition for both $\beta^* > 0$ and $\frac{\partial^2}{\partial\beta^2} H(\beta;d) > 0$ is $\mathrm{tr}\left(\mathcal{I}\right) > 2$, meaning that the subspace of $\mathcal{X}$ taken must have a more than 2 dimensions.
Practically this means that at the bare minimum, we must employ a linear model for the field.
In this case, the inverse problem of detecting model-form error is weakly well-posed.

\section{Discussion and Conclusions}
\label{sec:conclusions}
In this work, we studied the application of IFT to inverse problems.
Particularly, we derived conditions for which an inverse problem under IFT yields a unique MAP estimate.
Under free theory, the mathematics of Gaussian random fields can be used to characterize these conditions analytically, and we found an inverse problem becomes weakly well-posed under relatively mild assumptions.
While IFT was the main focus of this paper, a relationship to Gaussian process regression in general can be established to study how hyperparameters of Gaussian processes are trained.
However, many commonly used covariance kernels are not linear in the parameters, so Theorem~\ref{thm:theorem} does not apply, and Proposition~\ref{lem:lemma1} must be used.
Additionally, the posterior covariance kernel vanishing in the limit of infinite data was assumed, but this does not hold for a Gaussian process prior in general.
This condition can be relaxed if expressions appearing in Proposition~\ref{thm:informative_data} can be characterized.
The uniqueness of the MAP estimator to an inverse problem depends on both the gradient and the Hessian of the parameter posterior potential, which can be computed using expectations over the field.
Under Gaussian process regression, the expectations could perhaps be computed using the operator calculus for IFT~\cite{leike2016operator}.

The discussion was mainly limited to field potentials which are linear in the parameters so that the gradient and Hessian are easier to characterized.
While this is likely the case for field energies coming from a variational formulation of a PDE, this is not the case in general.
As mentioned above, the parameters in many classic Gaussian process covariance kernels appear in a nonlinear manner, such as the parameters of a Mat{\'e}rn kernel.
If the field energy is taken to be the integrated square residual of a PDE, then the parameters will also appear nonlinearly.
Studying the well-posedness of the parameters in the integrated square residual case is an important task, since it is most commonly used as an approximation to the field energy in practice.
In addition to this, inferring the trust along with other parameters automatically makes the problem nonlinear.
There are some situations where we would like to identify both, and finding the conditions under which the problem is weakly well-posed would be useful.
This problem is not weakly well-posed in general, e.g., for the heat equation both the trust parameter and the thermal conductivity will appear together.

In Sec.~\ref{sec:example}, an analytical example is provided with a focus on detecting model-form error under the Poisson equation.
We found that in the case of model-form error, the problem selecting the optimal trust becomes weakly well-posed in the limit of infinite data.
The expression for the optimal trust derived in this case reveals two important points.
If the physics are exactly correct, then the theory tells us to select infinite trust in the model.
Under model-form error, the optimal trust will be larger for models which are more correct.
This coincides with our current understanding of the physics-informed IFT approach.
This result was derived with a flat prior on the trust.
Finally, the example provided was constructed in the free theory case with both a flat prior and Jefferys prior selected for the trust parameter.
Insights with other priors or and with other common PDEs could prove useful.

\bibliographystyle{plain}
\bibliography{references}

\appendix

\section{Proof of lemma~\ref{thm:hessian}}
\label{apdx:hessian}
The proof is tedious but straightforward.
We need to find the Hessian matrix elements $\partial^2H(\lambda;d)/\partial\lambda_i\partial\lambda_j$.
From Lemma~\ref{eqn:unbiased_grad_lambda}, we have:
\begin{equation}
    \label{eqn:hessian_exp_0}
    \frac{\partial^2H(\lambda;d)}{\partial\lambda_i\partial\lambda_j} = 
    \frac{\partial}{\partial\lambda_i}\left\{
    \mathbb{E}\left[\frac{\partial H(\phi;\lambda)}{\partial\lambda_j}\middle|d,\lambda\right]
    -
    \mathbb{E}\left[\frac{\partial H(\phi;\lambda)}{\partial\lambda_j}\middle|\lambda\right]
    \right\}
    + \frac{\partial^2H(\lambda)}{\partial\lambda_i\partial\lambda_j}.
\end{equation}
So, the last term of eq.~(\ref{eqn:hessian}) is already in place.
We carry out the derivatives of the expectations one by one.

First we look at the derivative of the posterior expectation.
Expressing the expectation as a path integral, passing the derivative inside the integral, using the product rule of differentiation, and splitting the integral in two summands yields:
\begin{equation}
    \label{eqn:hessian_exp_1}
    \frac{\partial}{\partial\lambda_i} \mathbb{E}\left[\frac{\partial H(\phi;\lambda)}{\partial\lambda_j}\middle|d,\lambda\right]
    =
    \int\mathcal{D}\phi\:\frac{\partial p(\phi|d,\lambda)}{\partial\lambda_i}\frac{\partial H(\phi;\lambda)}{\partial\lambda_j}
    +
    \int\mathcal{D}\phi\:p(\phi|d,\lambda)\frac{\partial^2H(\phi;\lambda)}{\partial\lambda_i\partial\lambda_j}.
\end{equation}
The second path integral is the expectation $\mathbb{E}\left[\partial^2H(\phi;\lambda)/\partial\lambda_i\partial\lambda_j\middle|d,\lambda\right]$ which appears as is in eq.~(\ref{eqn:hessian}).

To resolve the first path integral of eq.~(\ref{eqn:hessian_exp_1}), we start with the derivative of the field posterior conditional on $\lambda$.
Using Bayes's rule, followed by product rule of differentiation, the chain rule, and then again Bayes's rule, we have:
\begin{align}
    \frac{\partial p(\phi|d,\lambda)}{\partial\lambda_i} &= \frac{\partial }{\partial\lambda_i}\left[\frac{p(\phi,\lambda|d)}{p(\lambda)}\right] \nonumber \\
    &= \frac{\partial p(\phi,\lambda|d)}{\partial\lambda_i}\frac{1}{p(\lambda)} - \frac{p(\phi,\lambda|d)}{p(\lambda)^2}\frac{\partial p(\lambda)}{\partial\lambda_i} \nonumber \\
    &= \frac{\partial p(\phi,\lambda|d)}{\partial\lambda_i}\frac{1}{p(\lambda)} - p(\phi|d,\lambda)\frac{\partial p(\lambda)}{\partial\lambda_i}\frac{1}{p(\lambda)} \label{eqn:hessian_exp_1_a}.
\end{align}
Recalling that $p(\lambda)=\exp\{-H(\lambda)\}$, we reduce the second term:
\begin{equation*}
    \frac{\partial p(\lambda)}{\partial\lambda_i}\frac{1}{p(\lambda)} = -\frac{\partial H(\lambda)}{\partial\lambda_i}\frac{\exp\{-H(\lambda)\}}{p(\lambda)}=-\frac{\partial H(\lambda)}{\partial\lambda_i}.
\end{equation*}
For eq.~(\ref{eqn:hessian_exp_1}), we need the path integral of this quantity multiplied by $p(\phi|d,\lambda)\frac{\partial H(\phi;\lambda)}{\partial\lambda_j}$.
Recalling the definition of posterior expectation and using the fact that the second factor is a constant with respect to the field, we get:
\begin{equation*}
    \int\mathcal{D}\phi\:p(\phi|d,\lambda)\frac{\partial p(\lambda)}{\partial\lambda_i}\frac{1}{p(\lambda)}\frac{\partial H(\phi;\lambda)}{\partial\lambda_j}
    =-\mathbb{E}\left[\frac{\partial H(\phi;\lambda)}{\partial\lambda_j}\middle| d,\lambda\right]\frac{\partial H(\lambda)}{\partial\lambda_i}.
\end{equation*}

For the first term in eq.~(\ref{eqn:hessian_exp_1_a}), use the product rule of differentiation, followed by the chain rule, take out the common factors, and apply Bayes's rule again to obtain:
\begin{align*}
    \frac{1}{p(\lambda)} \frac{\partial p(\phi,\lambda|d)}{\partial\lambda_i} &= \frac{1}{p(\lambda)}
    \frac{\partial }{\partial\lambda_i}\frac{\exp\left\{-\left[H(\phi;\lambda) + H(\lambda)\right]\right\}}{Z(\lambda)}\\
    &= 
     \frac{1}{p(\lambda)}\left\{-\frac{\exp\left\{-H(\phi;\lambda) - H(\lambda)\right\}}{Z(\lambda)}\left[\frac{\partial H(\phi;\lambda)}{\partial\lambda_i}+\frac{\partial H(\lambda)}{\partial\lambda_i}\right] 
     -\frac{\exp\left\{-H(\phi;\lambda) - H(\lambda)\right\}}{\left(Z(\lambda)\right)^2}\frac{\partial Z(\lambda)}{\partial \lambda_i}\right\}\\
    &=  
     \frac{1}{p(\lambda)}\left\{-p(\phi|d,\lambda)\left[\frac{\partial H(\phi;\lambda)}{\partial\lambda_i}+\frac{\partial H(\lambda)}{\partial\lambda_i}\right] 
     -p(\phi|d,\lambda)\frac{1}{Z(\lambda)}\frac{\partial Z(\lambda)}{\partial \lambda_i}\right\}\\
     &=
     -\frac{p(\phi|d,\lambda)}{p(\lambda)}\left\{\frac{\partial H(\phi;\lambda)}{\partial\lambda_i}+\frac{\partial H(\lambda)}{\partial\lambda_i} 
     +\frac{1}{Z(\lambda)}\frac{\partial Z(\lambda)}{\partial \lambda_i}\right\}\\
     &= 
     -p(\phi|d,\lambda)\left\{\frac{\partial H(\phi;\lambda)}{\partial\lambda_i}+\frac{\partial H(\lambda)}{\partial\lambda_i} 
    +\frac{1}{Z(\lambda)}\frac{\partial Z(\lambda)}{\partial \lambda_i}\right\}.
\end{align*}
For eq.~(\ref{eqn:hessian_exp_1}), we need the path integral of this quantity multiplied by $\partial H(\phi|\lambda)/\partial\lambda_j$.
Using the definition of the posterior expectation, taking constants with respect to the field out of the expectation, and employing eq.~(\ref{eqn:partderiv}) yields:
\begin{align*}
    \int\mathcal{D}\phi\:\frac{1}{p(\lambda)} \frac{\partial p(\phi,\lambda|d)}{\partial\lambda_i}\frac{\partial H(\phi;\lambda)}{\partial\lambda_j} =
    &-\mathbb{E}\left[\frac{\partial H(\phi;\lambda)}{\partial\lambda_i}\frac{\partial H(\phi;\lambda)}{\partial\lambda_j}\middle|d,\lambda\right] \\
    &-\mathbb{E}\left[\frac{\partial H(\phi;\lambda)}{\partial\lambda_j}\middle| d,\lambda\right]\frac{\partial H(\lambda)}{\partial\lambda_i} \\
    &+\mathbb{E}\left[\frac{\partial H(\phi;\lambda)}{\partial\lambda_j}\middle| d,\lambda\right]\mathbb{E}\left[\frac{\partial H(\phi;\lambda)}{\partial\lambda_i}\middle|d,\lambda\right].
\end{align*}
Notice that the first and the third lines on the right-hand side of the equation above give minus the posterior covariance of two quantities:
\begin{equation*}
    \int\mathcal{D}\phi\:\frac{1}{p(\lambda)} \frac{\partial p(\phi,\lambda|d)}{\partial\lambda_i}\frac{\partial H(\phi;\lambda)}{\partial\lambda_j} =
    -\mathbb{C}\left[\frac{\partial H(\phi;\lambda)}{\partial\lambda_i},\frac{\partial H(\phi;\lambda)}{\partial\lambda_j}\middle|d,\lambda\right]
    -\mathbb{E}\left[\frac{\partial H(\phi;\lambda)}{\partial\lambda_j}\middle| d,\lambda\right]\frac{\partial H(\lambda)}{\partial\lambda_i}.
\end{equation*}
Plugging the above results into eq.~(\ref{eqn:hessian_exp_1}) and canceling the two opposite terms that arise results in:
\begin{equation}
    \label{eqn:hessian_exp_1_b}
    \frac{\partial}{\partial\lambda_i} \mathbb{E}\left[\frac{\partial H(\phi;\lambda)}{\partial\lambda_j}\middle|d,\lambda\right] 
    = -\mathbb{C}\left[\frac{\partial H(\phi;\lambda)}{\partial\lambda_i},\frac{\partial H(\phi;\lambda)}{\partial\lambda_j}\middle|d,\lambda\right]
    + \mathbb{E}\left[\frac{\partial^2H(\phi;\lambda)}{\partial\lambda_i\partial\lambda_j}\middle|d,\lambda\right].
\end{equation}
Following similar steps for the prior expectation in eq.~(\ref{eqn:hessian_exp_0}) yields:
\begin{equation}
    \label{eqn:hessian_exp_1_c}
    \frac{\partial}{\partial\lambda_i} \mathbb{E}\left[\frac{\partial H(\phi;\lambda)}{\partial\lambda_j}\middle|\lambda\right] 
    = -\mathbb{C}\left[\frac{\partial H(\phi;\lambda)}{\partial\lambda_i},\frac{\partial H(\phi;\lambda)}{\partial\lambda_j}\middle|\lambda\right]
    + \mathbb{E}\left[\frac{\partial^2H(\phi;\lambda)}{\partial\lambda_i\partial\lambda_j}\middle|\lambda\right].
\end{equation}
Plugging in the right-hand sides of eq.~(\ref{eqn:hessian_exp_1_b}) and eq.~(\ref{eqn:hessian_exp_1_c}) into eq.~(\ref{eqn:hessian_exp_0}) completes the proof.

\section{Operator calculus and proof of Lemma~\ref{lem:covariance}}
\label{apdx:covariances}
To prove Lemma~\ref{lem:covariance}, we rely heavily on the operator calculus method in~\cite{leike2016operator}.
Before proving the result, we provide a brief background on the method.

\subsection{Primer on operator calculus for free theory}

The proof involves calculating various expectations of functionals over Gaussian random fields.
Using our established notation under IFT, we may represent such an expectation as the path integral
$$
\mathbb{E}[F(\phi) | \mathcal{N}(m,S)] = \int \mathcal{D}\phi\: F(\phi) p(\phi),
$$
where $p(\phi)$ is the density of $\mathcal{N}(m,S)$, i.e.,
$$
p(\phi) = \frac{1}{\sqrt{2\pi \det S}}\exp\left\{-\frac{1}{2}(\phi - m)^{\dagger} S^{-1}(\phi - m)\right\}.
$$
The operator calculus method provides tools for evaluating these expectations in a systematic way.
The method works by turning expectations over Gaussian random fields into noncommutative algebraic operations.

Suppose we wish to evaluate the expectation $\mathbb{E}[F(\phi) | \mathcal{N}(m,S)]$, where $F : \mathcal{X}\to \mathbb{R}$ is an analytical functional.
We define the so-called field-operator $\Phi : \mathcal{X} \to \mathcal{X}$ to be
\begin{equation}
    \label{eqn:fieldop}
    (\Phi\phi)(x) \coloneqq \left(\left[m + \int dy\: s(x,y) \frac{\delta}{\delta m(y)}\right]\phi\right)(x),
\end{equation}
where $s(x,y)$ is the kernel of $S$, and $\delta/\delta m$ denotes `action' of computing the gradient of a functional $F$ at $m$.
Physicists may recognize this as the variation of $F$ in the direction of the Dirac delta. 
To be precise, we state the definition of the derivative of a functional:
\begin{definition}[G\^ateaux derivative of a real-valued functional]
    Let $F : \mathcal{X} \to \mathbb{R}$ be a functional of $\phi$.
    The G\^ateaux derivative of $F$ at a point $\phi \in \mathcal{X}$ in the direction of the function $\psi \in \mathcal{X}$ is given by
    $$
    \frac{\delta F(\phi)}{\delta \psi} \coloneqq \lim_{\varepsilon \to 0} \frac{F(\phi + \varepsilon \psi) - F(\phi)}{\varepsilon}.
    $$
\end{definition}
The gradient of a functional is defined by relating the G\^ateaux derivative to an inner product in $L^2(\Omega)$.
By definition, the G\^ateaux derivative is continuous and linear (assuming that $F$ is also Fr\^echt differentiable), hence it is bounded, and the Riesz representation theorem applies~\cite{kreyszig1991introductory}[Theorem 3.8-1].
Therefore, the G\^ateaux derivative can be represented as an $L^2(\Omega)$ inner product.
That is, assuming that $F$ is differentiable at $m$, there exists a unique function in $L^2(\Omega)$, which is said to be the gradient of $F$, denoted by $\delta F(m) /\delta m(y)$, such that
$$
\frac{\delta F(\phi)}{\delta \psi} = \int_{\Omega} \frac{\delta F(\phi)}{m(y)} \psi d\Omega, \quad \forall \psi \in L^2(\Omega).
$$

Returning to eq.~(\ref{eqn:fieldop}), in short-hand notation we let $b_x = m$, which is termed the \emph{creation} operator, and we let $c_x = \int dy\: s(x,y)\frac{\delta}{\delta m(y)}$, which is called the \emph{annihilation operator}.
Then, to compute the expectation $\mathbb{E}[F(\phi) | \mathcal{N}(m,S)]$, we simply evaluate $F(\Phi)1$.
That is, we define a new operator $F(\Phi) : \mathcal{X} \to \mathcal{X}$ and let it act on the function $1$.
Performing the calculation $F(\Phi)1$ can be done by separating $\Phi$ into the creation and annihilation operators.
It is crucial to note here that the creation and annihilation operators do not commute.
However, we can evaluate products of creation and annihilation operators with the commutator $[c_x,b_y] \coloneqq c_xb_y - b_yc_x$.
Observe that $c_x$ and $b_y$ are defined in such a way that $[c_x,b_y] = S$ since we have
\begin{align*}
    c_xb_yF(m) &= \int s(x,x') \frac{\delta}{\delta m(x')}\{m(y) F(m)\}dx' \\
    &= \int s(x,x') \left\{\frac{\delta m(y)}{\delta m(x')} F(m) + m(y) \frac{\delta F(m)}{\delta m(x')}\right\} dx' \\
    &= \int s(x,x') \left\{F(m) + m(y) \frac{\delta F(m)}{\delta m(x')}\right\}dx' \\ 
    &= SF(m) + m(y) \int s(x,x') \frac{\delta F(m)}{\delta m(x')} dx' \\
    &= SF(m) + m(y) \left(\int dx' s(x,x')\frac{\delta}{\delta m(x')}\right)F(m) \\
    &= SF(m) + b_yc_xF(m) \\
    &\implies [c_x, b_y] = S.
\end{align*}
Then, any calculation involving $\Phi$ can be performed by moving the annihilation operators to the right-hand side of the equation using properties of the commutator, as we have $c_x1 = 0$.
For example, $c_xb_y1 = [c_x,b_y]1 + b_yc_x1 = S$.

\subsection{Proof of Lemma~\ref{lem:covariance}}

We expand the covariances using expectations
\begin{align}
    \mathbb{C}\left[\nabla_\lambda H(\phi;\lambda)\middle|\lambda\right] &= \mathbb{E}\left[\left[\nabla_\lambda H(\phi;\lambda)\right]^{T}\left[\nabla_\lambda H(\phi;\lambda)\right]\middle|\lambda\right] \nonumber \\
    &- \mathbb{E}\left[\nabla_\lambda H(\phi;\lambda)\middle|\lambda\right]^{T}\left[\mathbb{E}\left[\nabla_\lambda H(\phi;\lambda)\middle|\lambda\right]\right], \label{eqn:priorcov}
\end{align}
and
\begin{align}
    \mathbb{C}\left[\nabla_\lambda H(\phi;\lambda)\middle|d,\lambda\right] &= \mathbb{E}\left[\left[\nabla_\lambda H(\phi;\lambda)\right]^T\left[\nabla_\lambda H(\phi;\lambda)\right]\middle|d,\lambda\right] \nonumber \\
    &- \mathbb{E}\left[\nabla_\lambda H(\phi;\lambda)\middle|d,\lambda\right]^T\left[\mathbb{E}\left[\nabla_\lambda H(\phi;\lambda)\middle|d,\lambda\right]\right]. \label{eqn:postcov}
\end{align}
We evaluate each expectation one by one.
First, to simplify the notation moving forward, we drop the subscript $\lambda$, understanding that the derivatives are taken with respect to $\lambda$.
Further, we adopt the Einstein summation convention to simplify writing the integrals which will appear.
So, for example, we write
$$
\nabla S^{-1}_{xy} \coloneqq \nabla_{\lambda} S^{-1}_{\lambda},
$$
and under the Einstein summation convention, for functions $\phi, \psi \in \mathcal{X}$, we write
\begin{equation*}
    \psi^{\dagger}\phi = \phi_x \psi_x \coloneqq \int \phi(x) \psi(x) dx,
\end{equation*}
and
\begin{equation*}
    (S\phi)_x = S_{xy} \phi_y \coloneqq \int s(x,y) \phi(y) dy.
\end{equation*}
Furthermore, note that the trace of an operator is written as
\begin{equation}
    \label{eqn:trace}
    S_{xx} \coloneqq \int_{\Omega} s(x,x)dx = \mathrm{tr}(S),
\end{equation}
or for a combination of operators, we have $A_{xy}S_{yx} = \mathrm{tr}(AS)$.

We make extensive use of the operator calculus to evaluate the expectations in the equations~(\ref{eqn:priorcov}) and~(\ref{eqn:postcov}).
Since we are in the free theory case, we know that (i) the prior is Gaussian, following $p(\phi|\lambda) = \mathcal{N}(\phi|0,S)$, (ii) the posterior is also Gaussian, with $p(\phi | d, \lambda) = \mathcal{N}(\phi|\Tilde{m}, \Tilde{S})$, with both $\Tilde{m}$ and $\Tilde{S}$ given in Def.~\ref{def:freetheory}, and finally (iii) the potential will be of quadratic form, i.e. $H(\phi;\lambda) = \frac{1}{2}\phi^{\dagger}S^{-1}\phi$.
From here we see the gradient of the potential is simply $\nabla H(\phi ; \lambda) = \frac{1}{2}\phi^{\dagger}\nabla S^{-1}\phi$.

Everything is in place to evaluate the expectations, and we proceed with the expectation of $\nabla H(\phi;\lambda)$ first.
We need to evaluate the expectation of this expression over the prior and posterior, which are both Gaussian.
So, we will evaluate $\mathbb{E} \left[ \frac{1}{2}\phi^{\dagger}\nabla S^{-1}\phi \middle | \mathcal{N}(m, D) \right]$ for a general $m$ and $D$.
Then we will input the appropriate prior and posterior mean and covariance at the end of the calculation.
We find the following result.
\begin{lemma}
    \label{lem:C1}
    The expectation of $\nabla H(\phi;\lambda)$, taken over a Gaussian random field with mean $m$ and covariance $D$, is given by
    $$
    \mathbb{E} \left[ \frac{1}{2}\phi^{\dagger}\nabla S^{-1}\phi \middle | \mathcal{N}(m, D) \right] = \frac{1}{2}m^{\dagger}\nabla S^{-1}m + \frac{1}{2}\mathrm{tr}\left(\nabla S^{-1}D\right).
    $$
\end{lemma}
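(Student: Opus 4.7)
The plan is to apply the operator calculus framework summarized in the primer, specializing the creation and annihilation operators to the mean $m$ and covariance $D$ of the integrating measure $\mathcal{N}(m,D)$. I introduce the field operator $\Phi_x = b_x + c_x$, where $b_x$ acts as multiplication by $m(x)$ and $c_x = \int dy\, D_{xy}\,\delta/\delta m(y)$, so that the crucial commutator reads $[c_x,b_y] = D_{xy}$, and $c_x\cdot 1 = 0$. The expectation is then computed by evaluating
$$
\mathbb{E}\left[\tfrac{1}{2}\phi^{\dagger}\nabla S^{-1}\phi\,\middle|\,\mathcal{N}(m,D)\right] \;=\; \tfrac{1}{2}\,\Phi_x\,\nabla S^{-1}_{xy}\,\Phi_y\cdot 1,
$$
noting that the kernel $\nabla S^{-1}_{xy}$ is a scalar-valued multiplier at each $(x,y)$ and commutes with both $b$ and $c$.

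The next step is to expand the bilinear form in $\Phi$ into four pieces using bilinearity in $b$ and $c$, namely $b\nabla S^{-1}b + b\nabla S^{-1}c + c\nabla S^{-1}b + c\nabla S^{-1}c$, and evaluate each acting on the constant function $1$. The two terms ending in $c_y$ vanish immediately because $c_y\cdot 1 = 0$. The first term produces $b_x\nabla S^{-1}_{xy}b_y\cdot 1 = m_x\nabla S^{-1}_{xy}m_y = m^{\dagger}\nabla S^{-1}m$. The only remaining contribution is $c_x\nabla S^{-1}_{xy}b_y\cdot 1$, which I rewrite, using that $\nabla S^{-1}_{xy}$ commutes past $c_x$, as $\nabla S^{-1}_{xy}\,c_x b_y\cdot 1$, and then apply the commutator identity $c_xb_y\cdot 1 = [c_x,b_y]\cdot 1 + b_y c_x\cdot 1 = D_{xy}$. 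Contracting the indices under the Einstein convention and using symmetry $D_{xy}=D_{yx}$ yields $\nabla S^{-1}_{xy}D_{xy} = \mathrm{tr}(\nabla S^{-1}D)$. Restoring the factor $\tfrac{1}{2}$ gives the claimed identity.

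The main subtlety is the non-commutative bookkeeping: one must resist the temptation to slide $c_x$ through $b_y$ without accounting for the $D_{xy}$ commutator, and must keep careful track of which operators are genuine operators versus scalar-valued kernels (only $b$ and $c$ carry the non-commutativity, while $\nabla S^{-1}_{xy}$ does not). A minor care point is recognizing that the contraction $\nabla S^{-1}_{xy}D_{xy}$ is precisely the trace $\mathrm{tr}(\nabla S^{-1}D)$, which depends on the symmetry of $D$ as a covariance. Everything else is a routine identification of familiar terms, making this essentially a translation of the classical Gaussian identity $\mathbb{E}[\phi^{\dagger}A\phi] = m^{\dagger}Am + \mathrm{tr}(AD)$ into the operator-calculus language required for the subsequent proof of Lemma~\ref{lem:covariance}.
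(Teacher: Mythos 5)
Your proposal is correct and follows essentially the same route as the paper's proof: expand $\Phi = b + c$ in the bilinear form, kill the terms ending in the annihilation operator via $c\cdot 1 = 0$, read off $m^{\dagger}\nabla S^{-1}m$ from the pure creation term, and extract $\mathrm{tr}(\nabla S^{-1}D)$ from the mixed term via the commutator $[c_x,b_y]=D_{xy}$. Your explicit specialization of the annihilation operator's kernel to the covariance $D$ of the integrating measure is a point the paper's primer leaves implicit, but the argument is otherwise identical.
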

\begin{proof}
    The analytic functional we are taking the expectation of is $F(\phi) = \frac{1}{2}\phi^{\dagger}\nabla S^{-1}\phi$, and to evaluate the expectation we must work out $F(\Phi)1$.
    Expanding $F(\phi)$ as an inner product, pulling the integrals out of the expectation, expressing $\Phi$ in terms of the creation and annihilation operators, and inserting them in place of $\phi$ and letting it act on $1$, we obtain
    \begin{align*}
        F(\Phi)1 = \frac{1}{2} \Phi_x\nabla S^{-1}_{xy}\Phi_y 1 &=
        \frac{1}{2} (c_x + b_x)\nabla S^{-1}_{xy}(c_y + b_y)1 \\
        &= 
        \frac{1}{2}(c_x + b_x)\nabla S^{-1}_{xy}b_y 1\\
        &=
        \frac{1}{2}c_x\nabla S^{-1}_{xy}b_y 1+ \frac{1}{2}b_x\nabla S^{-1}_{xy}b_y 1\\
        &=
        \frac{1}{2}\nabla S^{-1}_{xy}c_xb_y 1+ \frac{1}{2}b_x\nabla S^{-1}_{xy}b_y 1.
    \end{align*}
    To make sense of this, we recognize the second piece as the inner product $\frac{1}{2}m^{\dagger}\nabla S^{-1}m$, since $b_x 1 = m_x$ by definition.
    We can use the commutator to evaluate the first piece, and we find
    \begin{align*}
        \frac{1}{2}\nabla S^{-1}_{xy}c_xb_y 1 &= \frac{1}{2}\nabla S^{-1}_{xy}D_{xy} \\
        &=
        \frac{1}{2}\mathrm{tr}\left(\nabla S^{-1}D\right).
    \end{align*}
    Putting the pieces together and integrating, we get
    $$
    \frac{1}{2} \Phi_x\nabla S^{-1}_{xy}\Phi_y 1 = \frac{1}{2}m^{\dagger}\nabla S^{-1}m + \frac{1}{2}\mathrm{tr}(\nabla S^{-1} D).
    $$
\end{proof}
Inserting the prior and posterior mean and covariance into the expression derived in Lemma~\ref{lem:C1}, we find the respective expectations.
For the prior, we have
\begin{equation}
    \label{eqn:priorexp}
    \mathbb{E}\left[\nabla H(\phi;\lambda)\middle|\lambda\right] = \frac{1}{2}\mathrm{tr}\left(\nabla S^{-1}S\right),
\end{equation}
and for the posterior,
\begin{equation}
    \label{eqn:postexp}
    \mathbb{E}\left[\nabla H(\phi;\lambda)\middle|d, \lambda\right] = \frac{1}{2}\Tilde{m}^{\dagger}\nabla S^{-1}\Tilde{m} + \frac{1}{2}\mathrm{tr}\left(\nabla S^{-1}\Tilde{S}\right).
\end{equation}

Next, we do the same trick for the expectation of $\nabla H(\phi;\lambda)^T \nabla H(\phi;\lambda)$, taken over the prior and posterior.
Using the operator calculus, we will evaluate $\mathbb{E}\left[ \nabla H(\phi;\lambda)^T \nabla H(\phi;\lambda) \middle| \mathcal{N}(m, D) \right]$ for a general $m$ and $D$, and insert the prior/posterior mean and covariance at the end.
We summarize the result in the following Lemma.
\newcommand{\intd}{\frac{1}{4}\int dx\: dy\: dz\: dw\:}
\newcommand{\ns}{\nabla S^{-1}}

\begin{lemma}
    \label{lem:C2}
    The expectation of $\nabla H(\phi;\lambda)^T\nabla H(\phi;\lambda)$, taken over $\mathcal{N}(m,D)$, is given by
    \begin{align*}
        \mathbb{E} \left[ \nabla H(\phi;\lambda)^T \nabla H(\phi;\lambda) \middle |\mathcal{N}(m, D)\right] &= \frac{1}{4}\left(m^{\dagger} \ns m\right)^2 + \frac{1}{2} \left(m^{\dagger} \ns m\right) \mathrm{tr}\left(\ns D\right) \\
        &+ m^{\dagger} \ns D \ns m + \frac{1}{2}\mathrm{tr}\left(\ns D \ns D\right) + \frac{1}{4} \mathrm{tr}^2\left(\ns D\right).
\end{align*}
\end{lemma}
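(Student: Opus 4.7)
The plan is to parallel the proof of Lemma~\ref{lem:C1}, one order higher. Writing $A = \ns$ and using the Einstein summation convention together with the symmetry $A_{xy}=A_{yx}$ inherited from $S^{-1}$, the integrand is
\begin{equation*}
\nabla H(\phi;\lambda)^T \nabla H(\phi;\lambda) = \tfrac{1}{4} A_{xy} A_{zw} \phi_x \phi_y \phi_z \phi_w.
\end{equation*}
To compute its expectation over $\mathcal{N}(m,D)$, I substitute the field operator $\Phi = b+c$ into each of the four $\phi$-slots and evaluate
\begin{equation*}
\tfrac{1}{4} A_{xy} A_{zw}\, (b_x+c_x)(b_y+c_y)(b_z+c_z)(b_w+c_w)\,1,
\end{equation*}
exactly as in Lemma~\ref{lem:C1} but at quartic order. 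Expanding the product gives $16$ ordered operator monomials, one per choice of $b$ or $c$ at each position.

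I would then process each monomial by applying operators right-to-left using the three rules $b_a\cdot F = m_a F$, $c_a\cdot 1 = 0$, and $[c_a, b_b] = D_{ab}$, together with the fact that $c_a$ commutes with any factor of $D$ since $D$ does not depend on $m$. These rules immediately kill every monomial that ends in $c_w$ (eight of the sixteen), and two more vanish once all $b$'s to the right of a $c$ have been consumed, leaving a dangling $c$ acting on a constant. The six surviving monomials reproduce the Wick pairings for the fourth moment of $\phi\sim\mathcal{N}(m,D)$: one $bbbb$ pattern giving $m_x m_y m_z m_w$; three patterns with a single $c$ that contracts with one of the $b$'s to its right, producing the six Isserlis terms with two $m$'s and one $D$; and two patterns in which two $c$'s pair off with two $b$'s, producing the three Isserlis terms with two $D$'s. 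In effect this rederives Isserlis' identity for $\mathbb{E}[\phi_x\phi_y\phi_z\phi_w]$ through the operator calculus.

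Contracting each Isserlis term against $A_{xy}A_{zw}$ is then mechanical. Using $A^{\dagger}=A$, the six two-$m$, one-$D$ terms collapse into two copies of $(m^{\dagger} A m)\,\mathrm{tr}(AD)$ (those with both $m$'s landing on the same factor of $A$) and four copies of $m^{\dagger} A D A m$ (those with the $m$'s split across the two $A$'s). The three two-$D$ terms collapse into one copy of $\mathrm{tr}^2(AD)$ (the pairing that contracts indices within each $A$) and two copies of $\mathrm{tr}(ADAD)$ (the two pairings that cross between the $A$'s). Multiplying the accumulated sum by the overall $\tfrac{1}{4}$ yields the identity claimed.

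The main obstacle is bookkeeping: cleanly enumerating the sixteen monomials and then grouping the $A$-contractions without mislabeling which pairing feeds which trace structure. As a cross-check I would compute $\mathbb{E}[\phi_x\phi_y\phi_z\phi_w]$ directly by writing $\phi = m+\eta$ with $\eta\sim\mathcal{N}(0,D)$ and invoking Isserlis' theorem; both routes reach the same $1+6+3$ decomposition and the same final coefficients. I would present the operator-calculus version in the appendix for consistency with Lemma~\ref{lem:C1} and with the primer in Appendix~\ref{apdx:covariances}.
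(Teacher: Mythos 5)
Your proposal is correct and follows essentially the same route as the paper's proof: expand the quartic $\Phi_x\Phi_y\Phi_z\Phi_w$ in creation/annihilation operators, discard the monomials killed by $c\cdot 1=0$ (the paper likewise ends up with six surviving terms after its fifth and sixth vanish), and contract the resulting $1+6+3$ pattern against $\nabla S^{-1}\otimes\nabla S^{-1}$ to get exactly the claimed coefficients. Your observation that the surviving monomials reproduce Isserlis' fourth-moment identity is a nice organizational shortcut and cross-check, but it does not change the substance of the argument.
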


\begin{proof}
    We recognize $F(\phi) = \nabla H(\phi;\lambda)^T \nabla H(\phi;\lambda)$, place $\Phi$ in place of $\phi$, and let it act on $1$, so we see we must evaluate
    \begin{equation}
        \label{eqn:h2int}
        F(\Phi)1 = \frac{1}{4} \Phi_x \ns_{xy} \Phi_y \Phi_z \ns_{zw} \Phi_w 1 = \frac{1}{4} \ns_{xy} \ns_{zw} \Phi_x \Phi_y \Phi_z  \Phi_w 1,
    \end{equation}
    since the field-operators commute with $\ns$.
    To evaluate eq.~(\ref{eqn:h2int}), we begin by expressing $\Phi$ with the creation and annihilation operators and work out the result.
    We let $F(\Phi)$ act on $1$, and we find:
    \begin{align*}
        F(\Phi)1 &= \frac{1}{4} (c_x + b_x) \ns_{xy} (c_y + b_y)(c_z + b_z) \ns_{zw} (c_w + b_w)1 \\
        &= \frac{1}{4} (c_x + b_x) \ns_{xy} (c_y + b_y)(c_z + b_z) \ns_{zw} b_w1 \\
        &= \frac{1}{4} (c_x + b_x) \ns_{xy} (c_y + b_y)b_z \ns_{zw} b_w1 + \frac{1}{4} (c_x + b_x) \ns_{xy} (c_y + b_y) \ns_{zw} c_zb_w1 \\
        &= \frac{1}{4} (c_x + b_x) \ns_{xy} c_y b_z \ns_{zw} b_w1 + \frac{1}{4} (c_x + b_x) \ns_{xy} b_y b_z \ns_{zw} b_w1 \\
        &+ \frac{1}{4} (c_x + b_x) \ns_{xy} c_y \ns_{zw} c_z b_w1 + \frac{1}{4} (c_x + b_x) \ns_{xy} b_y \ns_{zw} c_z b_w1 \\
        &= \frac{1}{4} c_x \ns_{xy} c_y b_z \ns_{zw} b_w1 + \frac{1}{4} b_x \ns_{xy}c_y b_z \ns_{zw} b_w1 \\
        &+ \frac{1}{4} c_x \ns_{xy} b_y b_z \ns_{zw} b_w1 + \frac{1}{4} b_x \ns_{xy} b_y b_z \ns_{zw} b_w1 \\
        &+ \frac{1}{4} c_x \ns_{xy} c_y \ns_{zw} c_z b_w1 + \frac{1}{4} b_x \ns_{xy} c_y \ns_{zw} c_z b_w1 \\
        &+ \frac{1}{4} c_x \ns_{xy} b_y \ns_{zw} c_z b_w1 + \frac{1}{4} b_x \ns_{xy} b_y \ns_{zw} c_z b_w1,
    \end{align*}
    and we will evaluate of each of these one by one.

    The first term gives:
    \begin{align*}
        c_x \ns_{xy} c_y b_z \ns_{zw} b_w1 &= \ns_{xy} c_x c_y b_z \ns_{zw} b_w1 \\
        &= \ns_{xy} c_x ([c_y,b_z] + b_z c_y) \ns_{zw} b_w1 \\
        &= \ns_{xy} c_x D_{yz} \ns_{zw} b_w1 + \ns_{xy} c_x b_z c_y \ns_{zw} b_w1 \\
        &= \ns_{xy} D_{yz} \ns_{zw} c_x b_w1 + \ns_{xy} c_x b_z \ns_{zw} c_y b_w1 \\
        &= \ns_{xy} D_{yz} \ns_{zw} (D_{xw} + b_w c_x)1 + \ns_{xy} c_x b_z \ns_{zw} (D_{yw} + b_w c_y)1 \\
        &= \ns_{xy} D_{yz} \ns_{zw} D_{xw}1 + \ns_{xy} c_x b_z \ns_{zw} D_{yw}1 \\
        &= \ns_{xy} D_{yz} \ns_{zw} D_{wx}1 + \ns_{xy} (D_{xz} + b_z c_x) \ns_{zw} D_{yw}1 \\
        &= \ns_{xy} D_{yz} \ns_{zw} D_{wx}1 + \ns_{xy} D_{xz} \ns_{zw} D_{yw}1 + \ns_{xy} b_z c_x \ns_{zw} D_{yw}1 \\
        &= \ns_{xy} D_{yz} \ns_{zw} D_{wx}1 + \ns_{yx} D_{xz} \ns_{zw} D_{wy}1 + \ns_{xy} b_z \ns_{zw} D_{yw} c_x 1 \\
        &= 
        2 \ns_{xy} D_{yz} \ns_{zw} D_{wx}1.
    \end{align*}
    After doing the integral, we find:
    \begin{equation*}
        \label{eqn:1var}
        \frac{1}{4} c_x \ns_{xy} c_y b_z \ns_{zw} b_w1 = \frac{1}{2} \mathrm{tr}(\ns D \ns D).
    \end{equation*}
    Next, we have
    \begin{align*}
        b_x \ns_{xy}c_y b_z \ns_{zw} b_w1 &= b_x \ns_{xy} (D_{yz} + b_z c_y) \ns_{zw} b_w1 \\
        &= b_x \ns_{xy} D_{yz} \ns_{zw} b_w 1+ b_x \ns_{xy} b_z \ns_{zw} D_{yw}1 \\
        &= b_x \ns_{xy} D_{yz} \ns_{zw} b_w1 + b_x \ns_{xy} D_{yw} \ns_{wz} b_z1 \\
        &= 2 b_x \ns_{xy} D_{yz} \ns_{zw} b_w1,
    \end{align*}
    so
    \begin{equation*}
        \label{eqn:2var}
        \frac{1}{4} b_x \ns_{xy}c_y b_z \ns_{zw} b_w 1= \frac{1}{2} m^{\dagger} \ns D \ns m.
    \end{equation*}
    The third term is
    \begin{align*}
        c_x \ns_{xy} b_y b_z \ns_{zw} b_w1 &= \ns_{xy} c_x b_y b_z \ns_{zw} b_w1 \\
        &= \ns_{xy} (D_{yx} + b_y c_x) b_z \ns_{zw} b_w1 \\
        &= \ns_{xy} D_{yx} b_z \ns_{zw} b_w 1+ \ns_{xy} b_y c_x b_z \ns_{zw} b_w 1\\
        &= \ns_{xy} D_{yx} b_z \ns_{zw} b_w1 + \ns_{xy} b_y (D_{xz} + b_z c_x) \ns_{zw} b_w 1\\
        &= \ns_{xy} D_{yx} b_z \ns_{zw} b_w1 + \ns_{xy} b_y D_{xz} \ns_{zw} b_w1 + b_y \ns_{yx} D_{xw} \ns_{wz} b_z1 \\
        &= \ns_{xy} D_{yx} b_z \ns_{zw} b_w 1+ 2 b_y \ns_{yx} D_{xw} \ns_{wz} b_z1,
    \end{align*}
    leading to
    \begin{equation*}
        \label{eqn:3var}
        \frac{1}{4} c_x \ns_{xy} b_y b_z \ns_{zw} b_w1 = \frac{1}{4} m^{\dagger}\ns m \mathrm{tr}(\ns D) + \frac{1}{2} m^{\dagger} \ns D \ns m.
    \end{equation*}
    The fourth term is of a squared, quadratic form:
    \begin{equation*}
        \label{eqn:4var}
        \frac{1}{4} b_x \ns_{xy} b_y b_z \ns_{zw} b_w 1= \frac{1}{4}\left(m^{\dagger} \ns m \right)^2.
    \end{equation*}
    We find the fifth and sixth terms to be zero
    \begin{align*}
        c_x \ns_{xy} c_y \ns_{zw} c_z b_w1 &= c_x \ns_{xy} c_y \ns_{zw} D_{zw}1 \\
        &= c_x \ns_{xy} \ns_{zw} D_{zw} c_y 1 = 0
    \end{align*}
    and
    \begin{align*}
        b_x \ns_{xy} c_y \ns_{zw} c_z b_w1 &= b_x \ns_{xy} c_y \ns_{zw} D_{zw}1 \\
        &= b_x \ns_{xy} \ns_{zw} D_{zw} c_y 1 = 0.
    \end{align*}
    The seventh term is:
    \begin{align*}
        c_x \ns_{xy} b_y \ns_{zw} c_z b_w1 &= \ns_{xy} c_x b_y \ns_{zw} D_{zw}1 \\
        &= \ns_{xy}(D_{xy} + b_y c_x) \ns_{zw} D_{zw}1 \\
        &= \ns_{xy} D_{xy} \ns_{zw} D_{zw}1 + \ns_{xy} b_y c_x \ns_{zw} D_{zw}1 \\
        &= \ns_{xy} D_{yx} \ns_{zw} D_{wz}1 + \ns_{xy} b_y \ns_{zw} D_{wz} c_x 1 \\
        &= \ns_{xy} D_{yx} \ns_{zw} D_{wz}1,
    \end{align*}
    and integrating we find:
    \begin{equation*}
        \label{eqn:7var}
        \frac{1}{4} c_x \ns_{xy} b_y \ns_{zw} c_z b_w 1= \frac{1}{4} \mathrm{tr}^2\left(\ns D\right).
    \end{equation*}
    Finally, the last term is:
    \begin{align*}
        b_x \ns_{xy} b_y \ns_{zw} c_z b_w1 &= b_x \ns_{xy} b_y \ns_{zw}(D_{wz} + b_w c_z)1 \\
        &= b_x \ns_{xy} b_y \ns_{zw} D_{wz}1,
    \end{align*}
    and when we integrate, we get:
    \begin{equation*}
        \label{eqn:8var}
        \frac{1}{4} b_x \ns_{xy} b_y \ns_{zw} c_z b_w 1= \frac{1}{4} m^{\dagger}\ns m \mathrm{tr}(\ns D).
    \end{equation*}
\end{proof}
Thus, the desired expectations of $\nabla H(\phi;\lambda)^T \nabla H(\phi;\lambda)$ taken over the prior and posterior in the free theory case can be found by replacing $m$ and $D$ with the appropriate mean and covariance for each in the expression derived in Lemma~\ref{lem:C2}.

Doing so, we find for the prior
\begin{equation}
    \label{eqn:squaredpriorexp}
    \mathbb{E}\left[\nabla H(\phi;\lambda)^T \nabla H(\phi;\lambda) \middle|\lambda\right] = \frac{1}{2}\mathrm{tr}\left(\ns S \ns S\right) + \frac{1}{4} \mathrm{tr}^2\left(\ns S\right),
\end{equation}
and the posterior gives:
\begin{multline}
    \label{eqn:squaredpostexp}
    \mathbb{E}\left[\nabla H(\phi;\lambda)^T \nabla H(\phi;\lambda) \middle|d, \lambda\right] = \frac{1}{4}\left(\Tilde{m}^{\dagger} \ns \Tilde{m}\right)^2 + \frac{1}{2} \left(\Tilde{m}^{\dagger} \ns \Tilde{m}\right) \mathrm{tr}\left(\ns \Tilde{S}\right) \\
    + \Tilde{m}^{\dagger} \ns \Tilde{S} \ns \Tilde{m} + \frac{1}{2}\mathrm{tr}\left(\ns \Tilde{S} \ns \Tilde{S}\right) + \frac{1}{4} \mathrm{tr}^2\left(\ns \Tilde{S}\right).
\end{multline}

Finally, the covariance of $\nabla H(\phi;\lambda)$ taken over the prior in eq.~(\ref{eqn:priorcov}) can be derived with equations eq.~(\ref{eqn:squaredpriorexp}) and eq.~(\ref{eqn:priorexp}),
\begin{equation*}
    \mathbb{C}\left[\nabla_{\lambda}H(\phi;\lambda)|\lambda\right] = \frac{1}{2} \mathrm{tr}\left( \nabla_{\lambda} S^{-1}_{\lambda} S_{\lambda} \nabla_{\lambda} S^{-1}_{\lambda} S_{\lambda} \right).
\end{equation*}
Likewise, the covariance of $\nabla H(\phi;\lambda)$ taken over the posterior, as given in eq.~(\ref{eqn:postcov}), can be found using equations eq.~(\ref{eqn:squaredpostexp}) and eq.~(\ref{eqn:postexp})
\begin{equation*}
    \mathbb{C}\left[\nabla_{\lambda}H(\phi;\lambda)|d, \lambda\right] =\Tilde{m}_{\lambda}^{\dagger} \nabla_{\lambda} S_{\lambda}^{-1} \Tilde{S}_{\lambda} \nabla_{\lambda} S_{\lambda}^{-1} \Tilde{m} + \frac{1}{2} \mathrm{tr}\left( \nabla_{\lambda} S^{-1}_{\lambda} \Tilde{S}_{\lambda} \nabla_{\lambda} S^{-1}_{\lambda} \Tilde{S}_{\lambda} \right),
\end{equation*}
which completes the proof.

\end{document}